\newcommand{\PFS}{\text{PFS}}
\newcommand{\OS}{\text{OS}}
\newcommand{\Hglob}{H_{0,\text{global}}}
\newtheorem{lemma}{Lemma}
\newtheorem{theorem}{Theorem}
\title{Exhausting the type I error level in event-driven group-sequential designs with a closed testing procedure for progression-free and overall survival}
\author[1,*]{Moritz Fabian Danzer}
\author[2]{Kaspar Rufibach}
\author[3]{Jan Beyersmann}
\author[1]{Rene Schmidt}
\affil[1]{Institute of Biostatistics and Clinical Research, University of Münster, Germany}
\affil[2]{Merck KGaA, Darmstadt, Germany}
\affil[3]{Institute of Statistics, University of Ulm, Germany}
\affil[*]{Corresponding author: moritzfabian.danzer@ukmuenster.de}
\date{\today}
\begin{document}
\begin{bibunit}
    
	\maketitle
	
\begin{abstract}
In oncological clinical trials, overall survival (OS) is the gold-standard endpoint, but long follow-up and treatment switching can delay or dilute detectable effects. Progression-free survival (PFS) often provides earlier evidence and is therefore frequently used together with OS as multiple primary endpoints. Since in certain scenarios trial success may be defined if one of the two hypotheses involved can be rejected, a correction for multiple testing may be deemed necessary. Because PFS and OS are generally highly dependent, their test statistics are typically correlated. Ignoring this dependency (e.g. via a simple Bonferroni correction) is not power optimal.\\
We develop a group-sequential testing procedure for the multiple primary endpoints PFS and OS that fully exhausts the family-wise error rate (FWER) by exploiting their dependence. Specifically, we characterize the joint asymptotic distribution of log-rank statistics across endpoints and multiple event-driven analysis cutoffs. Furthermore, we show that we can consistently estimate the covariance structure. Embedding these results in a closed testing procedure, we can recalculate critical values of the test statistics in order to spend the available type I error optimally. An important extension to the current literature is that we allow for both interim and final analysis to be event-driven.\\
Simulations based on illness–death multi-state models (Markov and non-Markov) empirically confirm FWER control for moderate to large sample sizes. Compared with a simple Bonferroni correction, the proposed methods recover roughly two thirds of the power loss for OS, increase disjunctive and conjunctive power, and enable meaningful early stopping. In planning, these gains translate into about 5\% fewer OS events required to reach the targeted power. We also discuss practical issues in the implementation of such designs and possible extensions of the introduced method.
\end{abstract}

\section{Introduction}\label{sec:introduction}
In oncological clinical trials, the time-to-event endpoint overall survival (OS), which is defined as the time from randomization to death, is the gold standard endpoint because potential clinical benefit can unambiguously be derived from it \citep{Pazdur:2008}. However, as this requires a long follow-up period and the actual effects may be confounded by switches to other treatments after progression, the use of short-term or surrogate endpoints may be informative. The most prominent candidate is progression-free survival (PFS) which is defined as the time from randomization to progression of the disease or death, whatever occurs first. Since the power of the log-rank test depends on the number of events, a hypothesis test for PFS can be performed earlier than for OS. If rejection of either one of the two null hypotheses is sufficient to claim success, this corresponds to ’multiple primary endpoints’ as defined by pertinent guidelines \citep{FDA:2017}. This scenario is relevant in drug development because it may allow for accelerated pathways  from regulatory agencies: accelerated approval by the FDA or conditional approval by the EMA, for example. In addition, availability of an analytical framework also allows to evaluate operating characteristics of futility stopping rules .\\
PFS and OS  are defined as waiting times in an illness-death model. This ensures that PFS is less than or equal to OS without imposing any restrictions on their dependence  \citep{Meller:2019}. A confirmatory analysis of both endpoints may also be worthwhile because it is easy to construct cases in which therapies, that are undoubtedly beneficial, show effects in PFS but not in OS, and vice versa (see \cite{Broglio:2009} and \cite{Morita:2015}, respectively). In \cite{Erdmann:2025}, it is illustrated what hazards can look like to generate such scenarios. In addition, this paper exploits the dependency between PFS and OS to plan a clinical trial such that planning assumptions for both endpoints are consistent. In particular, it becomes apparent that the assumption of proportional hazards can theoretically only hold in rather artificial situations for both endpoints simultaneously. Multiple primary endpoints are accounted for in sample size calculations using a weighted Bonferroni correction.\\
This Bonferroni correction is the simplest way to protect the family-wise error rate (FWER), i.e. the probability of making at least one false discovery, over the two endpoints PFS and OS. Our goal here is to improve on this by properly accounting for the joint distribution of the test statistics. Exploiting dependence between test statistics to gain power has a long tradition in clinical statistics: The prime example are group-sequential trials for one endpoint where the correlation between test statistics over time is considered. Mathematically identical is the scenario of nested populations as first discussed in \cite{Spiessens:2010}. In multi-arm trials with a shared control arm, Dunnett type tests \citep{Dunnett:1955}, which take the dependence of pairwise comparisons into account, are applied.\\
\cite{Wei:1984} investigated the asymptotic joint distribution of log-rank test statistics for potentially dependent time-to-event endpoints and \cite{Lin:1991} extended this to group-sequential designs. Beyond that, a challenge in our scenario is that the analysis cutoffs are chosen on an event-driven basis, i.e. after a certain number of events of a particular type has been observed. Event-driven censoring leads to both dependence of time-to-event and time-to-censoring and to dependence of the observed data across units. \cite{Rühl:2023} demonstrated recently that this type of censoring is generally compatible with the common assumptions of analyses of time-to-event endpoints in a counting process approach. However, these authors also found that including calendar time information in an event-driven analysis may introduce bias, as it disturbs the intensity of a counting process. Hence, in our setting, the case may be somewhat more complicated, as we are also interested in examining OS at the time of an analysis triggered by PFS events and vice versa. In this context, we take advantage of the fact that the stochastic process of log-rank statistics asymptotically behaves like a time-transformed Brownian motion in calendar time \citep{Olschewski:1986}. The line of argument will be as follows: Assuming independence of the uncensored patient data, we allow for both staggered trial entry and an event-driven final analysis based on the recent result of \cite{Rühl:2023} and using that PFS and OS arise from counting processes of the illness-death model. This approach does not require asymptotic arguments. To ensure that the intensity of the counting processes at hand are not disturbed by an event-driven interim, we demonstrate that the latter time point can asymptotically be replaced by a deterministic time which is determined via the trial design.\\
Having provided a framework, where both interim and final analysis may be event-driven, justifying current practice, we also aim at an improvement over the typically employed simple Bonferroni correction. This can be achieved by using closed testing procedures as introduced by \cite{Marcus:1976}. In applications, the graphical procedures of \cite{Bretz:2011, Maurer:2013} in particular have proven to be extremely helpful. The recently published work \cite{Anderson:2022} showed how these closed testing procedures can be further improved in terms of power by explicitly exploiting the known or consistently estimable correlation structure of test statistics. This applies in particular to group-sequential designs, in which the correlation across the various endpoints and analysis times must be taken into account. Compared to previous approaches, which, for example, define a hierarchical order of endpoints \citep{Glimm:2010}, this framework offers flexibility, which we consider to be advantageous for the reasons mentioned above.\\
The paper is organised as follows. In Section \ref{sec:asymptotic_distribution}, we introduce notation, present test statistics and their asymptotic joint distribution. We show how we can apply this within the framework of \cite{Anderson:2022} in a testing procedure for the two endpoints PFS and OS based on two exemplary designs in Section \ref{sec:sequential_designs}. Section \ref{sec:simulation_studies} contains the results of our simulation studies. In Section \ref{sec:practical_issues}, we address some practical issues connected to the implementation of the presented design. We conclude with a discussion in Section \ref{sec:discussion}. Proofs and additional simulation results are in the Supplementary Material. The code underlying our simulation study and the complete results are available at \url{https://github.com/moedancer/MultSurvTrialDesign}.

\section{Joint distribution of PFS- and OS-test statistics}\label{sec:asymptotic_distribution}

Each patient is recruited at calendar time $R$, assigned to a treatment group $Z \in \{0,1\}$, and experiences events PFS and OS at random times $T_{\PFS}$ and $T_{\OS}$ after its recruitment. Event dates might be randomly censored due to drop-out at time $C$ after enrollment. It is important to distinguish censoring through $C$ from administrative censoring. Administrative censoring is typically event-driven, i.e. done after a certain number of events has been observed in the trial. The observable data at calendar time $t$ for a patient who has already been recruited thus reduces to the tuple $(Z, R, X_{\PFS}(t), \Delta_{\PFS}(t), X_{\OS}(t), \Delta_{\OS}(t))$ where 
\begin{align*}
	&X_{\PFS}(t) \coloneqq T_{\PFS} \wedge C \wedge (t - R)_+, \Delta_{\PFS}(t)\coloneqq \mathbbm{1}_{T_{\PFS} \leq C \wedge (t - R)_+} \text{ and }\\
	&X_{\OS}(t) \coloneqq T_{\OS} \wedge C \wedge (t - R)_+, \Delta_{\OS}(t)\coloneqq \mathbbm{1}_{T_{\OS} \leq C \wedge (t - R)_+}
\end{align*}
for each patient where $a \wedge b$ denotes the minimum of the real numbers $a,b$. In particular, this information can be used to reconstruct when the individual was at risk for progression or death during the course of the trial. In a clinical trial we have $n$ independent replicates of this tuple at time $t$. These will be indexed by $i \in \{1,\dots,n\}$. The planned number of patients $n$ is fixed.\\
In an event-driven design, analyses will be conducted when a proportion of at least $r_{\PFS} \in [0,1)$ resp. $r_{\OS} \in [0,1)$ of the $n$ patients have experienced a PFS or an OS event, respectively.\\
The PFS analysis will be conducted at the random analysis cutoff date
\begin{equation}\label{eq:caldate_analysis_pfs}
	A_{\PFS} \coloneqq \inf \left\{ t \geq 0 \colon \frac{1}{n} \sum_{i=1}^n \Delta_{\PFS,i}(t) \geq r_{\PFS} \right\}
\end{equation} 
and the OS takes place at the random analysis cutoff date
\begin{equation}\label{eq:caldate_analysis_os}
	A_{\OS} \coloneqq \inf \left\{ t \geq 0 \colon \frac{1}{n} \sum_{i=1}^n \Delta_{\OS,i}(t) \geq r_{\OS} \right\}.
\end{equation}
%$A_{\PFS}\coloneqq Y_{\PFS}^{(\lceil r_{\PFS} \cdot n \rceil)}$ and the second analysis will be conducted at calendar date $A_{\OS}\coloneqq Y_{\OS}^{(\lceil r_{\OS} \cdot n \rceil)}$. 
In other words, we perform an interim or the final analysis as soon as the targeted number of events $d_E \coloneqq \lceil r_E \cdot n \rceil$ for the respective endpoint $E \in \{\PFS,\OS\}$ has been observed. In addition, we want to have the flexibility to perform an analysis for OS at the time of PFS analysis, and vice versa. The chronological order $A_{\PFS} \leq A_{\OS}$ is guaranteed if $r_{\PFS} \leq r_{\OS}$. According to the definitions in \eqref{eq:caldate_analysis_pfs} and \eqref{eq:caldate_analysis_os}, the analysis time might be equal to $\infty$ if too many patients are lost to follow-up. Suitable measures must be taken to prevent this, e.g. by choosing $r_{\PFS}$ and $r_{\OS}$ carefully,  by choosing a maximal calendar time for the respective analyses in advance, and operational measures to prevent excessive drop-out. If event-driven analysis cutoffs are determined in this way, these analysis cutoffs converge in probability to
\begin{equation}\label{eq:limits_analysis_dates}
	\begin{split}
	t_{\PFS}\coloneqq \inf \left\{ t\geq 0 \colon \mathbb{P}[T_{\PFS} \leq C \wedge (t - R)_+ ] \geq r_{\PFS} \right\} & \; \text{and}\\
	t_{\OS}\coloneqq \inf \left\{ t\geq 0 \colon \mathbb{P}[T_{\OS} \leq C \wedge (t - R)_+ ] \geq r_{\OS} \right\}
	\end{split}
\end{equation}
as $n \to \infty$. This is stated in Lemma \ref{lemma:calendar_dates}.\\
Next, we present notation and test statistics previously introduced in \cite{Lin:1991}. For each patient, the counting process $(N_{E,i}(t,s))_{t,s\geq 0}$ denotes whether the event $E \in \{\PFS,\OS\}$ has been observed at calendar time $t$ and before the patient has spent time $s$ in the trial, i.e.
\begin{equation*}
	N_{E,i}(t,s) \coloneqq \Delta_{E,i}(t) \cdot \mathbbm{1}_{T_{E,i} \leq s}.
\end{equation*}
Analogously, we define the at risk processes $(Y_{E,i}(t,s))_{t,s \geq 0}$. It indicates whether the patient is still at risk of experiencing event $E \in \{\PFS,\OS\}$ after if already spent time $s$ in the trial. However, we only consider information that is available up to calendar time $t$. In particular, this implies that $Y_{E,i}(t,s) = 0$ whenever $s \geq (t-R_i)_+$. It is given by
\begin{equation*}
	Y_{E,i}(t,s) \coloneqq \mathbbm{1}_{X_{E,i}(t) \geq s}.
\end{equation*}
Both quantities can be aggregated over the entire population. Those aggregates are denoted by $N_E(t,s)\coloneqq \sum_{i=1}^n N_{E,i}(t,s)$ and $Y_E(t,s)\coloneqq \sum_{i=1}^n Y_{E_i}(t,s)$, respectively. The first of these processes denotes the number of events of type $E$ that were observed until calendar time $t$ and that happened before the respective patients have spent time $s$ in the trial. The second one is the number of patients that have spent time $s$ in the trial without being censored or experiencing the event $E$ up to calendar time $t$. Obviously, we have $N_E(t,s) = N_E(t,t)$ and $Y_E(t,s) = 0$ for $s \geq t$. For the at risk-processes we also consider the group-specific quantities aggregate for the group with $Z=g$ with $g \in \{0,1\}$ by
\begin{equation*}
	Y^{Z=g}_E(t,s)\coloneqq \sum_{i=1}^n \mathbbm{1}_{Z_i=g} \cdot Y_{E,i}(t,s)
\end{equation*}
for $E \in \{\PFS, \OS\}$.\\
At calendar time $t$ the log-rank test statistic for the time-to-event endpoint $E \in \{\PFS,\OS\}$ is given by
\begin{equation*}
	U_E(t) \coloneqq \frac{1}{\sqrt{n}} \sum_{i=1}^n \int_0^t \left(Z_i - \frac{Y^{Z=1}_{E}(t,s)}{Y_{E}(t,s)}\right) N_{E,i}(t,ds) = \frac{1}{\sqrt{n}} \sum_{i=1}^n \Delta_{E,i}(t) \left( Z_i - \frac{Y^{Z=1}_{E}(t,X_{E,i}(t))}{Y_{E}(t,X_{E,i}(t))} \right).
\end{equation*}
The expected value of the processes $Y_E$ and $Y_E^{Z=1}$ are given by
\begin{equation*}
	\frac{1}{n}\mathbb{E}\left[Y_{E}(t,s)\right]=y_{E}(t,s) \coloneqq \mathbb{P}[(T_E \wedge C) > \max(s,(t - R)_+); R \leq t]
\end{equation*}
and
\begin{equation*}
	\frac{1}{n}\mathbb{E}\left[Y^{Z=1}_{E}(t,s)\right]=y^{Z=1}_{E}(t,s) \coloneqq \mathbb{P}[(T_E \wedge C) > \max(s,(t - R)_+); R \leq t; Z=1],
\end{equation*}
respectively, where, $\mathbb{P}[A;B]$ is the probability of the intersection of the events $A$ and $B$. These expected values denote the probability that a random patient has spent at least time $s$ in the trial without experiencing event $E$ and without being censored up to calendar date $t$. For each fixed $t\geq 0$, we now consider the process $(N_{E,i}(t,s))_{s \geq 0}$. By $\mathcal{F}(t,s)$ we denote the $\sigma$-algebra that contains the information about all events that happen before calendar time $t$ and within the calendar time interval $[R_i, R_i + s)$ for each patient $i$. For each fixed $t$, the corresponding counting process martingale w.r.t. the filtration $(\mathcal{F}(t,s))_{s \geq 0}$ is given by
\begin{equation*}
	M_{E,i}(t,s) = N_{E,i}(t,s) - \int_0^s Y_{E,i}(t,u) \lambda_E(u) du
\end{equation*}
where $\lambda_E$ denotes the hazard function of experiencing event $E$. It is given by
\begin{equation*}
	\lambda_E(s)\coloneqq \lim_{h \searrow 0} \frac{\mathbb{P}[T_E \in [t,t+h)|T_E \geq t]}{h} = \frac{f_{T_E}(s)}{S_{T_E}(s)}
\end{equation*}
where $f$ and $S$ shall denote probability density and survival function of the indexed random variable. Under the null hypothesis of equal distributions of the event $E$ in both treatment groups, the process $(U_E(t))_{t \geq 0}$ is asymptotically equivalent to the process $(u_E(t))_{t \geq 0}$. This latter process is defined by
\begin{equation*}
	u_E(t) \coloneqq \frac{1}{\sqrt{n}} \sum_{i=1}^n \int_0^t \left( Z_i - \frac{y^{Z=1}_{E}(t,s)}{y_{E}(t,s)} \right) M_{E,i}(t,ds).
\end{equation*}
Here, we replaced the counting process $N$ by the corresponding martingale $M$ in the integrator and the aggregated at risk processes $Y$ by their expectations $y$. Among others, \cite{Tsiatis:1981, Sellke:1983, Lin:1991} demonstrated the validity of these replacements. Going beyond that, we want to consider event-driven analyses of (possibly) both endpoints at the random analysis cutoffs $A_{\PFS}$ and $A_{\OS}$. These random cutoffs are called when a share of $r_{\PFS}$ resp. $r_{\OS}$ of the total $n$ patients have reached their PFS resp. OS event. For these analyses Theorem \ref{thm:main_convergence} yields
\begin{equation*}
	\mathbf{U}_{\PFS, \OS} \coloneqq (U_{\PFS}(A_{\PFS}), U_{\OS}(A_{\PFS}), U_{\PFS}(A_{\OS}), U_{\OS}(A_{\OS})) \overset{\mathcal{D}}{\to} \mathcal{N}(0, \boldsymbol{\Sigma}_{\PFS, \OS}).
\end{equation*}
In the limit $n \to \infty$, $A_{\PFS}$ and $A_{\OS}$ will converge against the fixed dates $t_{\PFS}$ and $t_{\OS}$ (see Lemma \ref{lemma:calendar_dates}). The asymptotic covariance matrix $\boldsymbol{\Sigma}_{\PFS, \OS}$ is then given by
\begin{equation*}
	\boldsymbol{\Sigma}_{\PFS, \OS} = 
	\begin{pmatrix}
		\sigma^2_{\PFS}(t_{\PFS}) & \sigma_{\PFS, \OS}(t_{\PFS}, t_{\PFS}) & \sigma^2_{\PFS}(t_{\PFS}) & \sigma_{\PFS, \OS}(t_{\PFS}, t_{\OS})\\
		\sigma_{\PFS, \OS}(t_{\PFS}, t_{\PFS}) & \sigma^2_{\OS}(t_{\PFS}) & \sigma_{\PFS, \OS}(t_{\OS}, t_{\PFS}) & \sigma^2_{\OS}(t_{\PFS})\\
		\sigma^2_{\PFS}(t_{\PFS}) & \sigma_{\PFS, \OS}(t_{\OS}, t_{\PFS}) & \sigma^2_{\PFS}(t_{\OS}) & \sigma_{\PFS, \OS}(t_{\OS}, t_{\OS})\\
		\sigma_{\PFS, \OS}(t_{\PFS}, t_{\OS}) & \sigma^2_{\OS}(t_{\PFS}) & \sigma_{\PFS, \OS}(t_{\OS}, t_{\OS}) & \sigma^2_{\OS}(t_{\OS})
	\end{pmatrix}
\end{equation*}
For a concise description of the estimation of the components of the matrix, we have to introduce
\begin{equation*}
	\hat{\mu}^{Z=g}_E(t,s)\coloneqq 1 - \frac{Y_E^{Z=g}(t,s)}{Y_E(t,s)} \quad \text{and} \quad \hat{\psi}^{Z=g}_{E}(t,s) \coloneqq \int_0^s \hat{\mu}^{Z=g}_E(t,u) \hat{\Lambda}_E(t,du)
\end{equation*}
for $E \in \{\PFS, \OS\}$ and $g\in\{0,1\}$ where $\hat{\Lambda}_E(t,\cdot)$ denotes the Nelson-Aalen estimate of the cumulative hazard function for event $E$ from all data available at calendar time $t$.\\
For components of the covariance matrix that refer to the same endpoint, i.e. those of the form $\sigma^2_{E_1}(t_{E_2})$ we can use standard estimates for log-rank test statistics, evaluated at the random analysis cutoff date $A_{E_2}$, i.e.
\begin{align*}
	\hat{\sigma}^2_{E_1}(A_{E_2}) \coloneqq &\frac{1}{n} \sum_{i=1}^n \int_0^{A_{E_2}} \frac{Y_{E_1}^{Z=1}(A_{E_2},s)}{Y_{E_1}(A_{E_2},s)}  \left( 1 - \frac{Y_{E_1}^{Z=1}(A_{E_2},s)}{Y_{E_1}(A_{E_2},s)}   \right) N_{E_1,i}(A_{E_2},ds)\\
	=&\frac{1}{n} \sum_{i=1}^n \int_0^{A_{E_2}} \hat{\mu}^{Z=0}_{E_1}(A_{E_2},s) \cdot  \hat{\mu}^{Z=1}_{E_1}(A_{E_2},s) \; N_{E_1,i}(A_{E_2},ds)\\
	=&\frac{1}{n} \sum_{i=1}^n \hat{\mu}^{Z=0}_{E_1}(A_{E_2}, X_{E_1,i}(A_{E_2})) \cdot  \hat{\mu}^{Z=1}_{E_1}(A_{E_2}, X_{E_1,i}(A_{E_2})) \cdot \Delta_{E_1,i}(A_{E_2})
\end{align*}
for $E_1, E_2 \in \{\PFS, \OS\}$. For components addressing the covariance of test statistics for different endpoints, i.e. those of the form $\sigma^2_{\PFS,\OS}(t_{E_1}, t_{E_2})$ the estimate amounts to
\begin{align*}
	&\hat{\sigma}_{\PFS, \OS}(A_{E_1}, A_{E_2}) \\
	\coloneqq & \frac{1}{n} \sum_{i=1}^{n} \Bigg( \left( \hat{\mu}^{Z=Z_i}_{\PFS}(A_{E_1}, X_{\PFS,i}(A_{E_1})) \Delta_{\PFS,i}(A_{E_1}) - \hat{\psi}^{Z=Z_i}_{\PFS}(A_{E_1}, X_{\PFS,i}(A_{E_1})) \right) \cdot \\
	&\qquad \qquad \left( \hat{\mu}^{Z=Z_i}_{\OS}(A_{E_2}, X_{\OS,i}(A_{E_2})) \Delta_{\OS,i}(A_{E_2}) - \hat{\psi}^{Z=Z_i}_{\OS}(A_{E_2}, X_{\OS,i}(A_{E_2})) \right) \Bigg)\\
\end{align*}
for $E_1, E_2 \in \{\PFS, \OS\}$. As shown in Theorem \ref{theorem:consistency_variance}, this constitutes a consistent variance estimation mechanism, i.e.
\begin{equation*}
	\hat{\boldsymbol{\Sigma}}_{\PFS, \OS} \overset{\mathbb{P}}{\to} \boldsymbol{\Sigma}_{\PFS, \OS}
\end{equation*}
where our estimate $\hat{\boldsymbol{\Sigma}}_{\PFS, \OS}$ contains all the components mentioned above. These are evaluated at the random analysis cutoffs $A_{\PFS}$ and $A_{\OS}$. Proofs of all statements are deferred to the Supplementary Material. They combine results derived in \cite{Wei:1984, Lin:1991} with Empirical Process Theory as presented in \cite{Shorack:2009} to account for event-driven censoring.\\

\section{Exhausting the type I error rate in a group-sequential procedure}\label{sec:sequential_designs}

The previous observations open up the possibility of exploiting this dependency in a group-sequential test procedure as described e.g. in \cite{Anderson:2022}. As in \cite{Lin:1991}, we are interested in testing the two null hypotheses
\begin{equation*}
	H_{0, \PFS}\colon S^{Z=0}_{\PFS}(s) = S^{Z=1}_{\PFS}(s) \quad \forall s\geq 0 \quad \text{and} \quad H_{0, \OS}\colon S^{Z=0}_{\OS}(s) = S^{Z=1}_{\OS}(s) \quad \forall s\geq 0.
\end{equation*}
Here, $S^{Z=g}_{E}$ denotes the survival function of endpoint $E$ in the respective groups $g \in \{0,1\}$. As we are interested in detecting a potential superiority of the experimental treatment with respect to at least one of the two hypotheses, we will apply one-sided test. We want to test those hypotheses within a closed testing procedure as described by \cite{Marcus:1976}. In order to reject any of the endpoint-specific hypotheses $H_{0, \PFS}$ and $H_{0, \OS}$ we first have to reject the intersection null hypothesis 
\begin{equation*}
	\Hglob \coloneqq H_{0, \PFS} \cap H_{0, \OS}.
\end{equation*}
As suggested by \cite{Anderson:2022}, we follow a weighting strategy from the graphical approach introduced in \cite{Bretz:2011} to examine $\Hglob$ and its components. In particular, we split up our overall significance level $\alpha$ into $\rho_{\PFS}\alpha$ and $\rho_{\OS}\alpha$ with $0 < \rho_{\PFS}, \rho_{\OS} < 1$ and $\rho_{\PFS} + \rho_{\OS} = 1$ which shall be used for the respective hypotheses. In case of a rejection of one of $H_{0,\PFS}$ and $H_{0,\OS}$ the level shall be propagated to the remaining component. This is depicted by the weighted directed graph in Figure \ref{figure:weighting_graph}.\\
\begin{figure}
	\centering
	\begin{tikzpicture}
		
		\node [circle, draw, node distance=2cm] (pfs) 
		{\parbox{1.5cm}{\centering $H_{0,\PFS}$ \\$\rho_{\PFS}$} };
		\node [circle, draw, node distance=2cm, right = of pfs] (os) 
		{\parbox{1.5cm}{\centering $H_{0,\OS}$ \\$\rho_{\OS}$} };
		
		\draw (pfs) edge ["1", ->, bend left = 45] (os);
		\draw (os) edge ["1", ->, bend left = 45] (pfs);
		
	\end{tikzpicture}
	\caption{Graphical representation for weighting strategy of the multiple testing approach with $\rho_{\PFS} + \rho_{\OS} = 1$.}
	\label{figure:weighting_graph}
\end{figure}
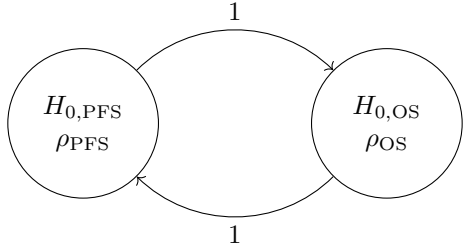%
Furthermore, we have to specify endpoint-specific $\alpha$-spending function families $g_{\PFS}\colon [0,1] \times [0,\alpha] \to [0, \alpha]$ and $g_{\OS}\colon [0,1] \times [0, \alpha] \to [0, \alpha]$. Both need to be monotonically increasing in their first argument. For time-to-event endpoints, this argument is typically given as the information fraction for the respective endpoint which for a time-to-event endpoint is the proportion of events observed so far in relation to the targeted number of events $d_E$. Hence, at some calendar time $t$, the current information fraction for the endpoint $E$ is given by
\begin{equation*}
	\tau_E(t) \coloneqq \frac{\sum_{i=1}^n N_E(t,t)}{d_E}.
\end{equation*}
The second argument is given by the total type I error level that shall be spent on this endpoint. To guarantee this, we require $g_E(\tau ,\rho \alpha) \leq \rho \alpha$ for all $0 \leq \rho,\tau \leq 1$. In the setting of the graphical testing procedure of Figure \ref{figure:weighting_graph}, we shall only spend $\rho_{\PFS}\alpha$ on PFS and $\rho_{\OS}\alpha$ on OS as long as the joint null hypothesis has not been rejected. Hence, we need to specify $g_{\PFS}(\cdot, \rho_{\PFS} \alpha)$ and $g_{\OS}(\cdot, \rho_{\OS} \alpha)$. However, as soon as the joint null hypothesis $\Hglob$ is rejected based on one of the endpoints, the other endpoint can be tested at full level $\alpha$ according to the graphical procedure. This is why we must also specify $g_{\PFS}(\cdot,\alpha)$ and $g_{\OS}(\cdot,\alpha)$. We would like to emphasize that these functions must  be pre-specified at the trial design stage.\\
As soon as the targeted number of events is reached, we want to have exhausted the significance level. After that, no further significance level should be spent for the endpoint. Furthermore, no significance level should be spent before the first event has been observed. In order to meet these requirements, we also assume that
\begin{equation*}
	g_E(0,\rho\alpha) = 0 \quad \text{and} \quad g_E(1,\rho\alpha) = \rho\alpha \quad \forall 0 \leq \rho \leq 1.
\end{equation*}
Alpha-spending functions are further discussed in Section 7 of \cite{Jennison:2000} or Section 3.3 of \cite{Wassmer:2016}.\\
In the following two subsections, we explain how we can improve the sequential testing procedures with the co-primary endpoints PFS and OS presented \cite{Erdmann:2025} in terms of power while still maintaining the family-wise error rate. We will do this using the tools mentioned here and based on the asymptotical results presented in Section \ref{sec:asymptotic_distribution}. We will assume throughout that $A_{\PFS} \leq A_{\OS}$. This will obviously be the case if $r_{\PFS} \leq r_{\OS}$.
\subsection{No $\alpha$-spending for OS in the first analysis}\label{subsec:no_early_os}
As described above, we are mainly interested in the investigation of PFS at calendar time $A_{\PFS}$ and of OS at calendar time $A_{\OS}$, respectively. We therefore first deal with the case in which no inference about the null hypothesis for OS is planned at the first analysis at calendar time $A_{\PFS}$. The corresponding testing strategy is illustrated in Figure \ref{figure:flow_graph_I}. 
\begin{figure}
	\centering
	\includegraphics[width=.8\textwidth]{"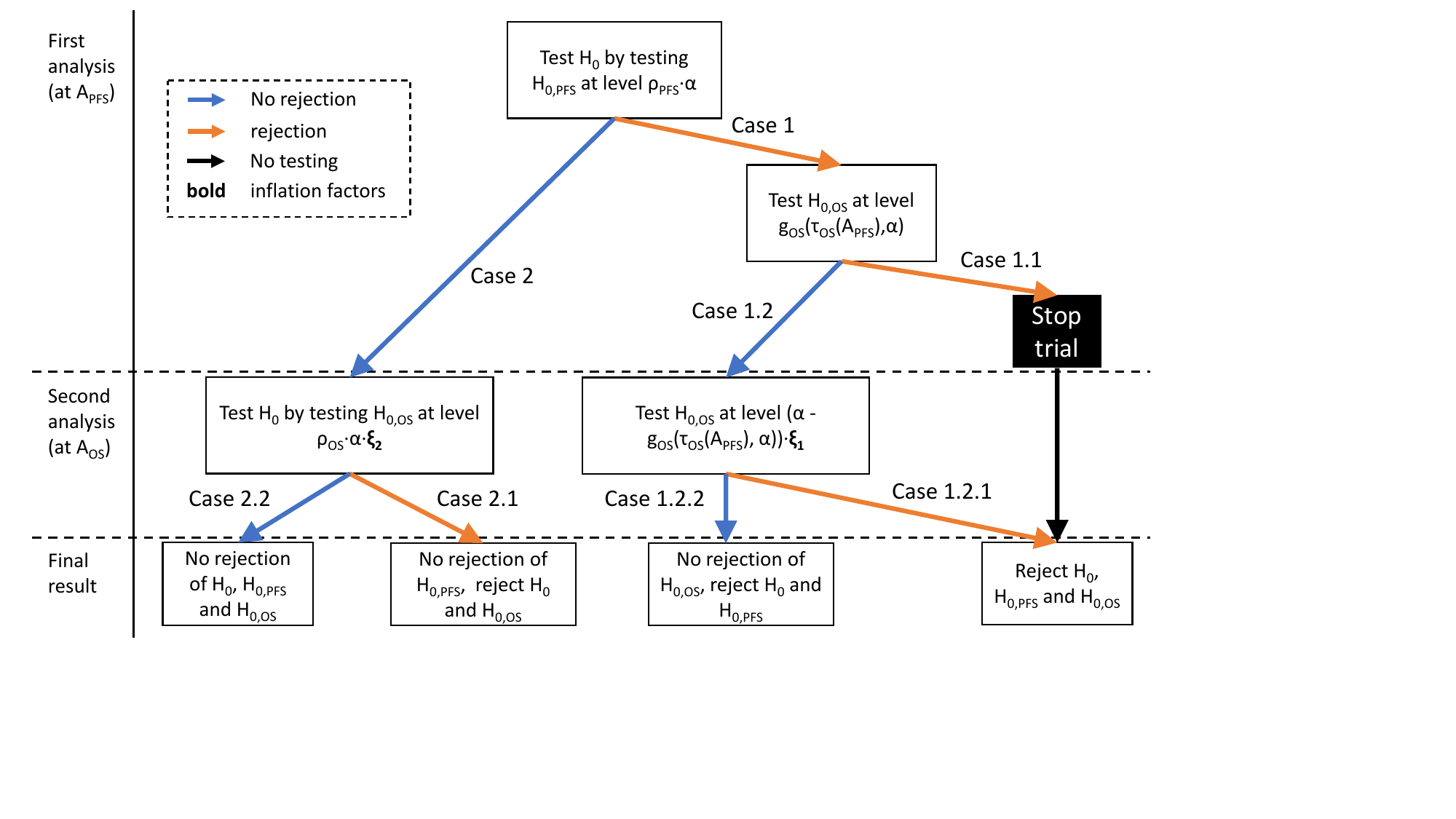"}
	\caption{Flow chart showing the course of a trial without initial testing of $H_{0,\OS}$ at $A_{\PFS}$.}
	\label{figure:flow_graph_I}
\end{figure}
As long as the intersection null hypothesis has not been rejected, we spend the available level for PFS ($\rho_{\PFS}\alpha$) at the first analysis and the available level for OS ($\rho_{\PFS}\alpha$) at the second analysis. This is expressed by the two alpha-spending functions
\begin{equation}\label{eq:design_I_initial_spending}
	%TODO: Sollte die spending-Funktion wirklich nur hier springen? Vll lieber zwischen den beiden "f3esten" Punkten (0,0) und (A_PFS, \rho_PFS \alpha) interpolieren (mit Pocock oder O'Brien-Fleming) um Flexibilität zu erhalten. Analog für OS
	g_{\PFS}(s,\rho_{\PFS}\alpha) = \mathbbm{1}_{s \geq 1} \rho_{\PFS}\alpha \quad \text{and} \quad g_{\OS}(s,\rho_{\OS}\alpha) = \mathbbm{1}_{s \geq 1} \rho_{\OS}\alpha.
\end{equation}
In particular, we have $g_{\OS}(\tau_{\OS}(A_{\PFS}), \rho_{\OS}\alpha) = 0$. Please also note that the approach presented in \cite{Anderson:2022} is robust to analysis schedules based on different information times as is intended with these choices.
\paragraph{First analysis}
The first analysis occurs at $A_{\PFS}$ and initially proceeds as in \cite{Erdmann:2025}. As determined by the $\alpha$-spending functions in \eqref{eq:design_I_initial_spending}, we investigate $\Hglob$ by testing $H_{0,\PFS}$ at level $g_{\PFS}(\tau_{\PFS}(A_{\PFS}),\rho_{\PFS}\alpha) = g_{\PFS}(1,\rho_{\PFS}\alpha) = \rho_{\PFS}\alpha$. Our test decision is thus given by
\begin{equation}\label{eq:design_I_PFS}
	\frac{U_{\PFS}(A_{\PFS})}{\sqrt{\hat{\sigma}_{\PFS}^2 (A_{\PFS})}} \leq \Phi^{-1}(\rho_{\PFS}\alpha)
\end{equation}
where $\Phi^{-1}$ denotes the quantile function of the standard normal distribution. We can distinguish between two cases. Either \eqref{eq:design_I_PFS} is met (Case 1) or not (Case 2). For the latter case, we proceed to the second stage without propagating any level as no test of $H_{0,\OS}$ is planned according to the $\alpha$-spending function.\\

\underline{Case 1:} We can reject the joint null hypothesis $\Hglob$. From a formal point of view, we still have to investigate whether we can reject $H_{0,\PFS}$ in order to comply with the framework of \cite{Anderson:2022}. However, this should be the case for a sensibly chosen $\alpha$-spending function $g_{\PFS}(\cdot, \alpha)$ as one should obviously choose $g_{\PFS}(\cdot, \alpha) \geq g_{\PFS}(\cdot, \rho_{\PFS}\alpha)$. Now, we can propagate the level $\rho_{\PFS}\alpha$ to the test of the remaining hypothesis $H_{0,\OS}$. The testing strategy now depends on $g_{\OS}(\cdot, \rho_{\PFS}\alpha + \rho_{\OS}\alpha)=g_{\OS}(\cdot, \alpha)$. At the interim analysis the test decision is determined by
\begin{equation}\label{eq:design_I_pureOS_early}
	\frac{U_{\OS}(A_{\PFS})}{\sqrt{\hat{\sigma}_{\OS}^2 (A_{\PFS})}} \leq \Phi^{-1}(g_{\OS}(\tau_{\OS}(A_{\PFS}), \alpha)).
\end{equation}
If \eqref{eq:design_I_pureOS_early} is fulfilled, we can terminate the trial as we can claim success in rejecting $H_{0,\OS}$ (Case 1.1). Otherwise, $H_{0,\OS}$ remains unrejected for now and we proceed to the next stage (Case 1.2). This may be due to the fact that the evidence for rejecting $H_{0,\OS}$ is not yet convincing or also because $g_{\OS}(\cdot, \alpha)$ does not plan to test $H_{0,\OS}$ at this stage, i.e. $g_{\OS}(\tau_{\OS}(A_{\PFS}), \alpha)=0$. In this context, the question arises as to how we deal with the propagated level. We could spend it immediately or save it for the final analysis. These choices are represented by the two $\alpha$-spending functions
\begin{align}
	&g_{\OS}(s,\alpha) = \mathbbm{1}_{s \geq \tau_{\OS}(A_{\PFS})} \rho_{\PFS}\alpha + \mathbbm{1}_{s \geq 1} \rho_{\OS}\alpha \label{eq:design_I_full_spending_early} \\
	\text{and} \quad &g_{\OS}(s,\alpha) = \mathbbm{1}_{s \geq 1} \alpha, \label{eq:design_I_full_spending_late}
\end{align} 
respectively.\\
\paragraph{Second analysis}
The analysis takes place as soon as the targeted number of OS events has been observed, i.e. at the random analysis cutoff date $A_{\OS}$. As lined out above, we will carry out analyses at this analysis date in the cases 1.2 and 2 at which we will look separately now.\\

\underline{Case 1.2:} We basically proceed with testing $H_{0,\OS}$ as in a group-sequential design that is determined by the $\alpha$-spending function $g_{\OS}(\cdot, \alpha)$. At this analysis we can spend the remaining level $\alpha - g_{\OS}(\tau_{\OS}(A_{\PFS}), \alpha)$.As usual, we can inflate this level by some factor, say $\xi_1$ to ensure
\begin{equation}\label{eq:design_I_pureOS_late}
	\begin{split}
	&\mathbb{P}_{H_{0,\OS}}\left[\frac{U_{\OS}(A_{\PFS})}{\sqrt{\hat{\sigma}_{\OS}^2 (A_{\PFS})}} > \Phi^{-1}(g_{\OS}(\tau_{\OS}(A_{\PFS}), \alpha)) \bigcap \frac{U_{\OS}(A_{\OS})}{\sqrt{\hat{\sigma}_{\OS}^2 (A_{\OS})}} \leq \Phi^{-1}((\alpha - g_{\OS}(\tau_{\OS}(A_{\PFS}), \alpha)) \cdot \xi_1) \right]\\
	= & \alpha - g_{\OS}(\tau_{\OS}(A_{\PFS}), \alpha).
	\end{split}
\end{equation}
As the two test statistics are jointly normally distributed and, according to Theorem \ref{theorem:consistency_variance}, we can consistently estimate the covariance matrix, $\xi_1$ can be computed easily. This corresponds to the standard procedure of group-sequential designs for one endpoint. If the OS test statistic is significant at this inflated level, we can also reject $H_{0,\OS}$ (Case 1.2.1) or remain only with rejection of $H_{0}$ and $H_{0, \PFS}$ (Case 1.2.2). However, one should note that likely, further PFS events will have happened. As discussed in \cite{Asikanius:2024}, this 'pipeline data' is not used for decision-making anymore, but may be used to update estimates of group-specific estimates of survival functions and relative effect measures.\\

\underline{Case 2:} We still want to reject $\Hglob$. In the first analysis, we already spent a level of $\rho_{\PFS}\alpha$ on testing it based on PFS data. In this stage, we intend to spend the remaining $\rho_{\OS}\alpha$ on testing it based on OS data. However, if we use this as the local level, we obtain
\begin{equation}\label{eq:design_I_inefficiency}
	\begin{split}
	&\mathbb{P}_{\Hglob}[\Hglob\text{ can be rejected}]\\
	=&\mathbb{P}_{\Hglob}\left[\frac{U_{\PFS}(A_{\PFS})}{\sqrt{\hat{\sigma}_{\PFS}^2 (A_{\PFS})}} \leq \Phi^{-1}(\rho_{\PFS}\alpha) \bigcup \frac{U_{\OS}(A_{\OS})}{\sqrt{\hat{\sigma}_{\OS}^2 (A_{\OS})}} \leq \Phi^{-1}(\rho_{\OS}\alpha)\right]\\
	\leq&\mathbb{P}_{\Hglob}\left[\frac{U_{\PFS}(A_{\PFS})}{\sqrt{\hat{\sigma}_{\PFS}^2 (A_{\PFS})}} \leq \Phi^{-1}(\rho_{\PFS}\alpha)\right] + \mathbb{P}_{\Hglob}\left[\frac{U_{\OS}(A_{\OS})}{\sqrt{\hat{\sigma}_{\OS}^2 (A_{\OS})}} \leq \Phi^{-1}(\rho_{\OS}\alpha)\right]\\
	=&\rho_{\PFS}\alpha + \rho_{\OS}\alpha= \alpha.
	\end{split}
\end{equation}
The discrepancy between the left and the right hand side of this inequality grows with increasing correlation of the two involved test statistics. In a standard group-sequential design, one can overcome this inefficiency as demonstrated in \eqref{eq:design_I_pureOS_late} as the correlation structure of the test statistics is known. However, based on our results summarized in Section \ref{sec:asymptotic_distribution}, this can also be done here. As in \cite{Anderson:2022}, we can make sure to really spend the full level by calculating the inflation factor $\xi_2$ that fulfills
\begin{equation*}
	1 - \mathbb{P}_{\Hglob}\left[ \frac{U_{\PFS}(A_{\PFS})}{\sqrt{\hat{\sigma}_{\PFS}^2 (A_{\PFS})}} > \Phi^{-1}(\rho_{\PFS}\alpha) \;\bigcap\; \frac{U_{\OS}(A_{\OS})}{\sqrt{\hat{\sigma}_{\OS}^2 (A_{\OS})}} > \Phi^{-1}(\xi_2 \cdot \rho_{\OS}\alpha) \right] = \alpha
\end{equation*}
and determining the rejection of $\Hglob$ by
\begin{equation}\label{eq:design_I_joint_decision_OS}
	\frac{U_{\OS}(A_{\OS})}{\sqrt{\hat{\sigma}_{\OS}^2 (A_{\OS})}} \leq \Phi^{-1}(\xi_2 \cdot \rho_{\OS}\alpha).
\end{equation}
We can calculate $\xi$ because under the strict null hypothesis of equal distribution of all involved endpoints in both groups the two test statistics are centered and asymptotically jointly normally distributed with a covariance matrix that is consistently estimated by 
\begin{equation*}
	\begin{pmatrix}
		\hat{\sigma}^2_{\PFS}(t_{\PFS}) & \hat{\sigma}_{\PFS, \OS}(t_{\PFS}, t_{\OS})\\
		\hat{\sigma}_{\PFS, \OS}(t_{\PFS}, t_{\OS}) & \hat{\sigma}^2_{\OS}(t_{\OS}).
	\end{pmatrix}
\end{equation*} 
If \eqref{eq:design_I_joint_decision_OS} holds, we reject $\Hglob$. We can also reject $H_{0, \OS}$ within the closed testing procedure if we assume that $g_{\OS}(1, \alpha) - g_{\OS}(\tau_{\OS}(A_{\PFS}), \alpha) \geq \rho_{\OS}\alpha$. Now, we could also reinvestigate PFS based on the $\alpha$-spending function $g_{\PFS}(\cdot,\alpha)$ after propagating all the level to this hypothesis. However, this is only of minor interest here, as $H_{0,\OS}$ has already been rejected. If \eqref{eq:design_I_joint_decision_OS} does not hold, the trial finishes without rejection of any hypothesis (Case 2.2).

\subsection{Including $\alpha$-spending for OS in the first analysis}\label{subsec:with_early_os}

\begin{figure}
	\centering
	\includegraphics[width=.8\textwidth]{"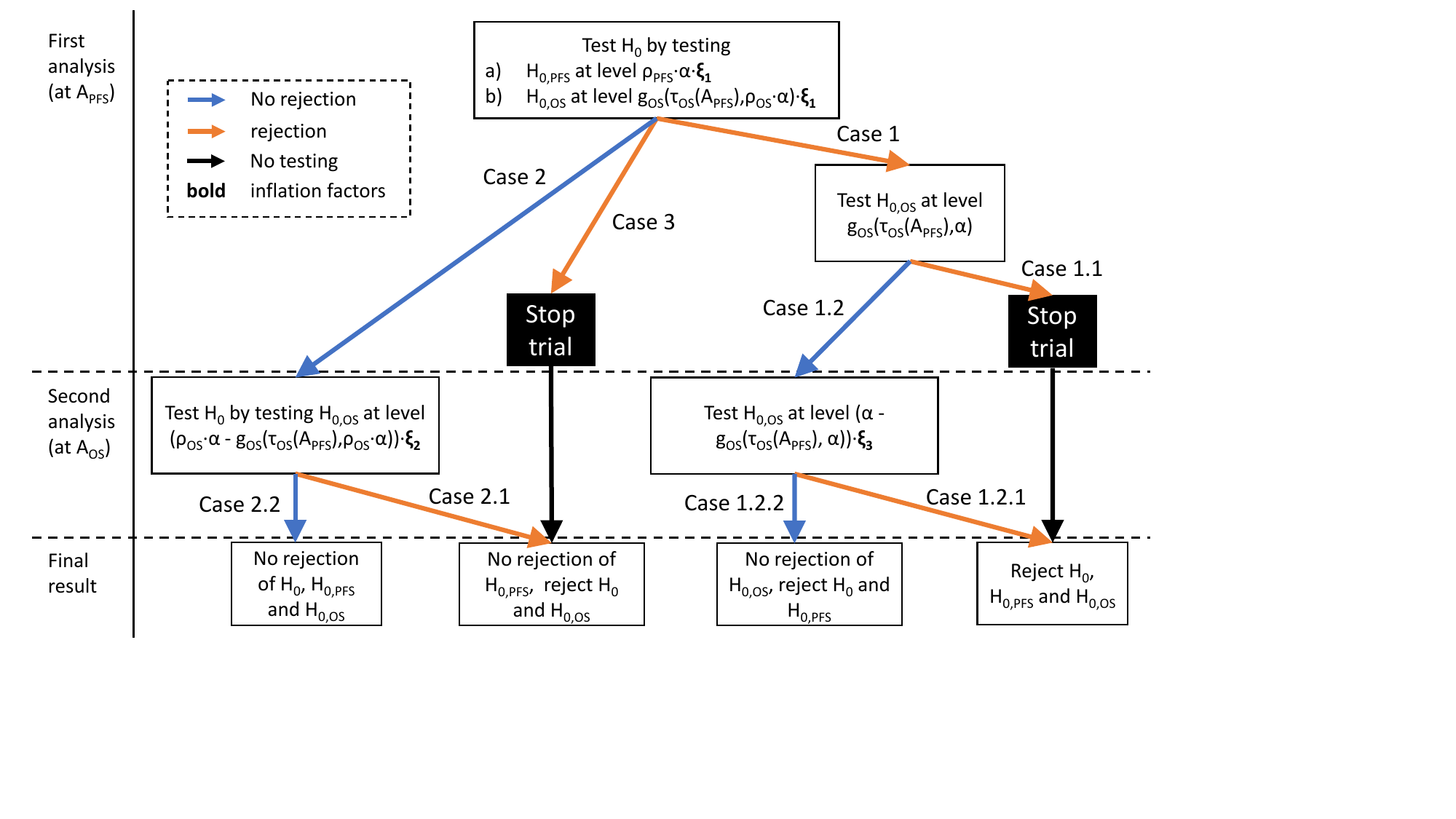"}
	\caption{Flow chart showing the course of a trial with early assessment of the null hypothesis for OS. The trial is stopped as soon as superiority regarding OS is shown.}
	\label{figure:flow_graph_II}
\end{figure}

In the first scenario above we only considered the option of testing OS at the first analysis if $\Hglob$ was already rejected based on PFS. However, as also suggested in \cite{Erdmann:2025}, we could also directly include a test for OS in the first analysis. In this case we have to choose $g_{\OS}(\cdot,\rho_{\OS}\alpha)$ in such a way that $g_{\OS}(\tau_{\OS}(A_{\PFS}),\rho_{\OS}\alpha) > 0$. As in \cite{Erdmann:2025}, one could e.g. choose a spending function that approximates the O'Brien-Fleming stopping boundaries (see \cite{Lan:83}) which is given by
\begin{equation}\label{eq:design_II_initial_spending}
	g_{\OS}(s,\rho_{\OS}\alpha) = 2 \cdot \left(1 - \Phi\left( \frac{\Phi^{-1}(1 - \rho_{\OS}\alpha/2)}{\sqrt{s}} \right) \right).
\end{equation}
or its one-sided version, respectively. After a potential propagation of the significance level, one would again have the choice of whether to use the additional level directly for the interim analysis or only in the final analysis. This would correspond to the choices
\begin{align}
	&g_{\OS}(s,\alpha) = \mathbbm{1}_{s \geq \tau_{\OS}(A_{\PFS})} \rho_{\PFS}\alpha + 2 - 2 \cdot \Phi\left( \frac{\Phi^{-1}(1 - \rho_{\OS}\alpha/2)}{\sqrt{s}} \right) \label{eq:design_II_full_spending_early} \\
	\text{and} \quad &g_{\OS}(s,\alpha) = \mathbbm{1}_{s \geq 1} \rho_{\PFS}\alpha + 2 - 2 \cdot \Phi\left( \frac{\Phi^{-1}(1 - \rho_{\OS}\alpha/2)}{\sqrt{s}} \right), \label{eq:design_II_full_spending_late}
\end{align} 
respectively.
We leave $g_{\PFS}$ as it was in the preceding subsection. Similarities and differences to the slightly more simple design of Section \ref{subsec:no_early_os} can already be seen by comparing Figures \ref{figure:flow_graph_I} and \ref{figure:flow_graph_II}. In what follows, we focus in particular on the differences to the prior design.
\paragraph{First analysis} 
The first major difference already occurs at analysis at the first analysis cutoff $A_{\PFS}$. We already want to assess the null hypothesis for OS when assessing $\Hglob$. In total, we are willing to spend a significance level of $\alpha_1 \coloneqq \rho_{\PFS}\alpha + g_{\OS}(\tau_{\OS}(A_{\PFS}),\rho_{\OS}\alpha)$. Analogously to \eqref{eq:design_I_inefficiency}, we will not exhaust this level if we test $H_{0,\PFS}$ at level $\rho_{\PFS}\alpha$ and $H_{0,\OS}$ at level $g_{\OS}(\tau_{\OS}(A_{\PFS}),\rho_{\OS}\alpha)$. As we conduct the two analyses simultaneously, we can now compute a joint inflation factor $\xi_1$ that solves
\begin{equation}\label{eq:gs_first_analysis}
	1 - \mathbb{P}_{\Hglob}\left[ \underbrace{\frac{U_{\PFS}(A_{\PFS})}{\sqrt{\hat{\sigma}_{\PFS}^2 (A_{\PFS})}} > \Phi^{-1}(\xi_1 \cdot \rho_{\PFS}\alpha) \,\bigcap\, \frac{U_{\OS}(A_{\PFS})}{\sqrt{\hat{\sigma}_{\OS}^2 (A_{\PFS})}} > \Phi^{-1}(\xi_1 \cdot g_{\OS}(\tau_{\OS}(A_{\PFS}),\rho_{\OS}\alpha))}_{\eqqcolon \Gamma(\xi_1)} \right] = \alpha_1.
\end{equation}
For the following tests we distinguish between three different cases. If we reject $H_{0,\PFS}$ at the inflated level $\xi_1 \cdot \rho_{\PFS}\alpha$ (Case 1), we proceed as in Case 1 of the previous subsection, either with an early rejection of $H_{0,\OS}$ (Case 1.1) and a resulting early termination of the trial or without (Case 1.2). If neither $H_{0,\PFS}$ nor $H_{0,\OS}$ are rejected at the inflated levels $\xi_1 \cdot \rho_{\PFS}\alpha$ and $\xi_1 \cdot g_{\OS}(\tau_{\OS}(A_{\PFS}),\rho_{\OS}\alpha)$, respectively (Case 2), we proceed to the second analysis as in Case 2 of the previous subsection. A small difference between the two scenarios is discussed below in the part on the second analysis. If $H_{0,\OS}$ is significant at the inflated level $\xi_1 \cdot g_{\OS}(\tau_{\OS}(A_{\PFS}),\rho_{\OS}\alpha)$ (Case 3), we can reject $\Hglob$ and also $H_{0,\OS}$ if $g_{\OS}(\tau_{\OS}(A_{\PFS}),\alpha) \geq \xi_1 \cdot g_{\OS}(\tau_{\OS}(A_{\PFS}),\rho_{\OS}\alpha)$ which ensures consonance of the testing procedure. In this case, we are inclined to stop the trial as we are able to reject the null hypothesis for our most important endpoint.

\paragraph{Second analysis}
As above, the analysis takes place at the random analysis cutoff date $A_{\OS}$. Case 1.2 is completely analogous to Case 1.2 of Subsection \ref{subsec:no_early_os}. In Figure \ref{figure:flow_graph_II}, the corresponding inflation factor is given by $\xi_3$. Case 2 here is a little bit different from the preceding Case 2 in Subsection \ref{subsec:no_early_os}.\\

\underline{Case 2:} We can still inflate when assessing the null hypothesis for OS within the intersection hypothesis $\Hglob$. In comparison to Section \ref{subsec:no_early_os}, however, we have already carried out a test for OS and an inflation of the local levels has already been carried out at the interim analysis. In dependence of the previously chosen inflation factor $\xi_1$, the continuation region for the intersection hypothesis has been defined as $\Gamma(\xi_1)$. At this analysis, we want to spend the remaining level $g_{\OS}(1, \rho_{\OS}\alpha) - g_{\OS}(\tau_{\OS}(A_{\PFS}), \rho_{\OS}\alpha)$. In order to exhaust this, we can compute the inflation factor $\xi_2$ that fulfills
\begin{equation*}
	1 - \mathbb{P}\left[ \Gamma(\xi_1) \; \bigcap \; \frac{U_{\OS}(A_{\OS})}{\sqrt{\hat{\sigma}_{\OS}^2 (A_{\OS})}} > \Phi^{-1}(\xi_2 \cdot (g_{\OS}(1, \rho_{\OS}\alpha) - g_{\OS}(\tau_{\OS}(A_{\PFS}), \rho_{\OS}\alpha))) \right] = \alpha
\end{equation*}
where we plug in the previously chosen inflation factor $\xi_1$. As earlier, we can compute this probability based on the consistent estimation of the covariance matrix of the three involved test statistics.

\section{Simulation studies}\label{sec:simulation_studies}
After our theoretical derivations we now empirically investigate the properties of our trial designs, most specifically whether they maintain the family-wise error rate (FWER). Also, we want explore how the power compares to designs that only use very simple corrections for the multiple testing problem or avoid it altogether by testing only one endpoint.\\
To properly account for the dependence between PFS and OS we use time-homogeneous Markovian multi-state models as in \cite{Erdmann:2025, Meller:2019} and adaptations thereof. The basic model consists of three different states. The current state of disease at time $s$ after trial entry is given by $X_i(s) \in \{0,1,2\}$. At trial entry, each patient starts in the initial state (0) and might then transition to the state of progressive disease (1) or to the state of death (2). After a transition to the progredient state, the patient can also die. Under the Markov assumption, the probabilities of transitions in the future only depend on the current state of the patient and are independent from further information about the previous course of disease. Then, these transition probabilities are governed by the transition intensities which are given by
\begin{equation*}
	\lambda_{kl}(s)\coloneqq \lim_{h \searrow 0} \frac{\mathbb{P}[X(s+h) = l|X(s) = k]}{h}
\end{equation*}
for $(k,l) \in \{(0,1), (0,2), (1,2)\}$. In a time-homogeneous model, these functions of time since trial entry $s$ are assumed to be constant. The values for the four baseline models we are considering here are shown in Table \ref{table:idm_parameters}. To assess type I error we generate scenarios in which patients of both treatment arms follow the transition intensities $\lambda^C_{kl}$ from Table \ref{table:idm_parameters}. To evaluate deviations from the Markov property we consider frailty models in which the transition intensities are multiplied by patient-specific random variables. This corresponds to a random rescaling of time, as shown in \cite{Aalen:1988}. We choose Gamma(10, 1/10) as the frailty distribution. For a meaningful statement about the influence of breaking the Markov assumption on the FWER, the same simulated data is used in the simulations with frailty as in the simulations without frailty, in that only an individual rescaling of the time is carried out (see \cite{Aalen:1988}). In order to assess the asymptotic behaviour of the testing procedures, we consider total sample sizes of $n \in \{128, 256, 640, 960, 1600\}$ patients, that are recruited over a period of 32 time units with an allocation ratio of $1/2$. The first analysis will take place as soon as a proportion of $r_{\PFS} = 25/64$ of these patients have experienced a PFS event. The second analysis will be conducted after $r_{\OS} = 38/64$ have died. We also simulate loss to follow-up by an independent, exponentially distributed variable with parameter $-\log(1-0.1)/12$.\\
To compare the power between different approaches, we choose transition intensities of the form
\begin{equation}\label{eq:alternatives_by_weighting}
	\lambda_{kl}^E = \lambda_{kl}^C - w \cdot (\lambda_{kl}^C - \lambda_{kl, \text{power}}^E)
\end{equation}
for $w \in \{0.6, 0.7, 0.8, 0.9, 1\}$ in the experimental arm without any consideration of frailty. Hence, for $w=1$ we obtain the scenarios of \cite{Erdmann:2025} for which the event number were tuned in such a way to achieve a power of 80\% to reject $H_{0,\PFS}$ and a power of 80\% to reject $H_{0,\OS}$ in the Bonferroni-adjusted design without an early analysis of OS-data.\\
\begin{table}[h]
	\centering
	\begin{tabular}{c||c|c|c|c|c}
		\multirow{2}{*}{Model} & $\lambda^C_{01}$ & $\lambda^C_{02}$ & $\lambda^C_{12}$ & \multirow{2}{*}{$\lceil r_{\PFS} \cdot n \rceil$} & \multirow{2}{*}{$\lceil r_{\OS} \cdot n \rceil$}\\
		& $\lambda^E_{01,\text{power}}$ & $\lambda^E_{02,\text{power}}$ & $\lambda^E_{12,\text{power}}$ & & \\
		\hline
		\hline
		\multirow{2}{*}{1} & 0.06 & 0.30 & 0.30 & \multirow{2}{*}{433} & \multirow{2}{*}{630}\\
		& 0.10 & 0.40 & 0.30 & & \\
		\hline
		\multirow{2}{*}{2} & 0.30 & 0.28 & 0.50 & \multirow{2}{*}{452} & \multirow{2}{*}{747}\\
		& 0.50 & 0.30 & 0.60 & & \\
		\hline
		\multirow{2}{*}{3} & 0.140 & 0.112 & 0.250 & \multirow{2}{*}{644} & \multirow{2}{*}{742}\\
		& 0.180 & 0.150 & 0.255 & & \\
		\hline
		\multirow{2}{*}{4} & 0.18 & 0.06 & 0.17 & \multirow{2}{*}{940} & \multirow{2}{*}{963}\\
		& 0.23 & 0.07 & 0.19 & & \\
	\end{tabular}
	\caption{Parameter configurations for the time-homogeneous Markovian illness-death models considered in our simulations and number of events at which the two analyses are triggered.}
	\label{table:idm_parameters}
\end{table}%
In each simulation run, up to 800 patients per treatment arm will be recruited, with a rate of 25 patients per arm per time unit. As above, recruitment stops as soon as $A_{\OS}$ is reached which might occur earlier. In the scenarios investigating the power of the approaches, 25 patients are recruited per time unit and per arm. As above, loss to follow-up is simulated by an independent, exponentially distributed variable with parameter $-\log(1-0.1)/12$. The interim analysis and final analysis are triggered by the number of observed PFS and OS events, respectively, which are given in the last two columns of Table \ref{table:idm_parameters}. These event numbers are chosen so that in the case $w=1$ in \eqref{eq:alternatives_by_weighting}, the power to reject $H_{0,\PFS}$ and the power to reject $H_{0,\OS}$ are both 80\%. In \cite{Erdmann:2025} it is described in more detail how these were derived by simulation.\\
For all those scenarios, we will compare nine different testing approaches that are described in further detail in Table \ref{table:testing_procedures}.  The group of the first four is designed so that no test for overall survival is planned in the interim analysis (as in subsection \ref{subsec:no_early_os}). In the group of the next four, OS is always assessed in the first analysis (as in subsection \ref{subsec:with_early_os}). The corresponding critical values are determined by the alpha-spending function according to O'Brien-Fleming. Within both of the two groups mentioned above, we first consider a Bonferroni correction, which ensures that PFS is tested at the one-sided level of 0.005 and OS at the one-sided level of 0.02. As a first improvement, we also consider a testing procedure, which recycles the corresponding significance level if one of the two hypotheses can be rejected \citep{Maurer:2013}. Finally, we use the procedures presented in Section \ref{sec:sequential_designs} to try to exhaust the family-wise error rate. Within the closed testing procedure, we consider both the option to use the propagated significance level only in the final analysis or to use it already in the interim analysis. Finally, we also consider the option to conduct a single test for OS only in the final analysis at full significance level. This can serve as a benchmark as it should give the highest overall power to reject $H_{0,\OS}$. In comparison with other methods, we are primarily interested in how much power is lost because of the Bonferroni-correction and what proportion of this can be made up by improved test procedures.\\
We compare various measures between these 9 procedures. These are the empirical rejection proportions of $H_{0,\PFS}$, $H_{0,\OS}$ of at least one of these hypotheses and of both hypotheses simultaneously. Under the null hypothesis, the proportion of rejections of at least one hypothesis is our estimate of the family-wise error rate. Under alternatives, this value is the disjunctive power and the frequency of simultaneous rejections of both hypotheses is the conjunctive power.\\
\begin{table}[h]
	\centering
	\begin{tabularx}{\textwidth}{l|X}
		Abbreviation  &  Description\\ 
		\hline 
		\textbf{BON} & Bonferroni-adjusted testing procedure with a test of PFS at level $\rho_{\PFS}\alpha$ at the interim analysis and a test of OS at level $\rho_{\OS}\alpha$ at the interim analysis.\\
        \textbf{REC} & Bonferroni-adjusted testing as above with recycling of $\rho_{\PFS}\alpha$ after rejection of $H_{0,\PFS}$.\\
		\textbf{EX/LAST} & Improved closed testing procedure with exhaustion of the FWER and $\alpha$-spending functions as in \eqref{eq:design_I_initial_spending} and \eqref{eq:design_I_full_spending_late}.\\ 
		\textbf{EX/FIRST} & Improved closed testing procedure with exhaustion of the FWER and $\alpha$-spending functions as in \eqref{eq:design_I_initial_spending} and \eqref{eq:design_I_full_spending_early}.\\
		\textbf{BON/GS} & Bonferroni-adjusted testing procedure with a test of $H_{0,\PFS}$ at level $\rho_{\PFS}\alpha$ at the interim analysis and a group-sequential procedure for OS at level $\rho_{\OS}\alpha$ with $\alpha$-spending function as in \eqref{eq:design_II_initial_spending}.\\
        \textbf{REC/GS} & Bonferroni-adjusted testing as above with recycling significance level if one of the hypotheses is rejected at the interim analysis.\\
		\textbf{EX/GS/LAST} & Improved closed testing procedure with exhaustion of the FWER and $\alpha$-spending functions as in \eqref{eq:design_II_initial_spending} and \eqref{eq:design_II_full_spending_late}.\\ 
		\textbf{EX/GS/FIRST} & Improved closed testing procedure with exhaustion of the FWER and $\alpha$-spending functions as in \eqref{eq:design_II_initial_spending} and \eqref{eq:design_II_full_spending_early}.\\
		\textbf{OS} & Only one test of $H_{0,\OS}$ at the final analysis at level $\alpha$. 
	\end{tabularx}
	\caption{Overview of the different testing procedures.}
	\label{table:testing_procedures}
\end{table}%
Each scenario is simulated 100,000 times. For a true underlying value of 0.025 and 0.8, the Monte Carlo estimates of our simulations will hence lie within the intervals $[0.240, 0.260]$ and $[0.7975, 0.8025]$, respectively, with a probability of 95\%.

\subsection{Results}
At first, we want to check whether the improved testing procedures control the FWER in Markovian and also non-Markovian settings. In Figure \ref{figure:fwer_comparisons}, we compare FWERs for the three testing approaches BON, EX/LAST and OS with and without frailty modeling in all four scenarios in dependence of the total sample size. For the sake of clarity, we do not show the values for the other strategies from Table \ref{table:testing_procedures} here. It is shown in the Supplementary Material that these are equivalent or very similar to those of the strategies BON or EX/LAST, respectively.
\begin{figure}[h]
	\centering
	\includegraphics[width=\textwidth]{"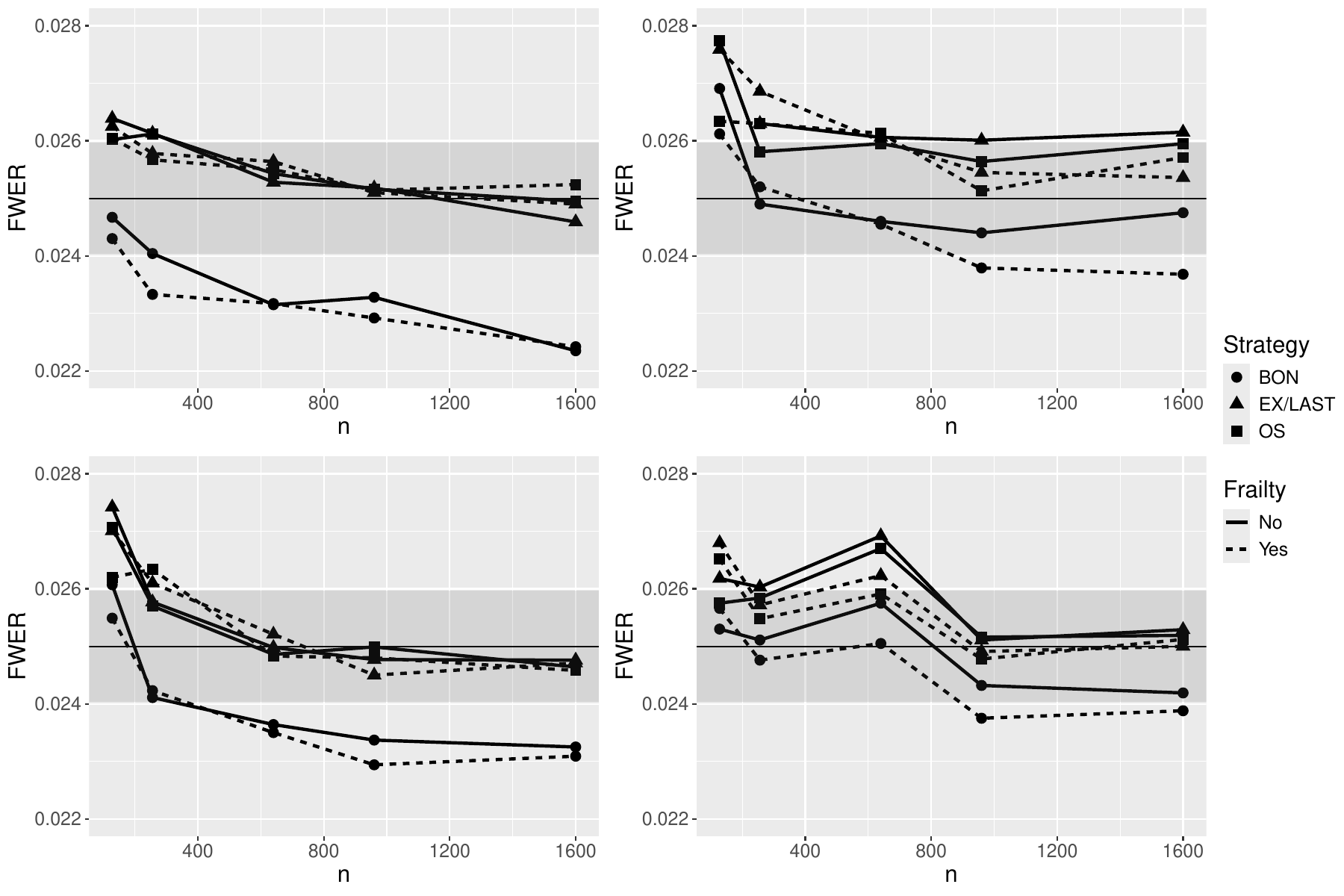"}
	\label{figure:fwer_comparisons}
	\caption{FWERs of the three testing procedures BON, EX/LAST and OS with and without consideration of frailties in all four scenarios. The shaded area characterises the Monte Carlo sampling error interval. The subfigures are arranged as follows: Scenario 1, top left; Scenario 2, top right; Scenario 3, bottom left; Scenario 4, bottom right.}
\end{figure}%
The following observations can be made throughout all scenarios: For small sample sizes, all of the approaches that should exhaust the nominal type I error rate (EX/LAST and OS) are slightly anti-conservative. The empirical rejection rates of the new procedure do not substantially exceed those of the simple test for OS, whose anti-conservativeness is well known for small numbers of cases \citep{Kellerer:1983}. The slight inflation can therefore possibly be attributed to similar problems. For moderate and large sample sizes, the Bonferroni-corrected procedure is clearly conservative as the dependency between the test statistics is not exploited. The other two procedures exhaust the nominal level of 2.5\% without noticeably exceeding it. In all cases, there are no relevant differences between simulations with and without frailty. We interpret this as the new procedures are not sensitive to the violation of the Markov assumption.\\
In Table \ref{table:results_scenario1_w=1}, we summarise results for the nine approaches under the parameter configurations of Scenario 1 of Table \ref{table:idm_parameters} with $w=1$ in \eqref{eq:alternatives_by_weighting}.
\begin{table}[h]
	\centering
	\begin{tabular}{l || c c c c c}
		Testing procedure & Rej. $H_{0,\PFS}$ & Rej. $H_{0,\OS}$ & Disj. power & Conj. power & Early stop\\
		\hline
		BON & 0.7937 & 0.8072 & 0.8960 & 0.7049 & 0.0000 \\
        REC & 0.7937 & 0.8225 & 0.8960 & 0.7202 & 0.0000 \\
		EX/LAST & 0.7937 & 0.8264 & 0.8999 & 0.7202 & 0.0000 \\
		EX/FIRST & 0.7937 & 0.8262 & 0.8999 & 0.7200 & 0.4265 \\
		BON/GS & 0.7937 & 0.8067 & 0.8958 & 0.7046 & 0.1396 \\
        REC/GS & 0.7937 & 0.8220 & 0.8958 & 0.7200 & 0.1730 \\
		EX/GS/LAST & 0.7970 & 0.8263 & 0.9007 & 0.7226 & 0.1396 \\
		EX/GS/FIRST & 0.7970 & 0.8258 & 0.9007 & 0.7222 & 0.4326 \\
		OS & 0.0000 & 0.8313 & 0.8313 & 0.0000 & 0.0000 
	\end{tabular}
	\caption{Empirical rejection rates for the different multiple testing procedures}
	\label{table:results_scenario1_w=1}
\end{table}%
The first eight procedures behave similarly when assessing $H_{0,\PFS}$. We find a slight advantage for the two new procedures that already assess $H_{0,\PFS}$ at the interim analysis. This, because we inflate the level for this test according to \eqref{eq:gs_first_analysis}. The rejection rate for $H_{0,\OS}$ suffers from a drop of about 2.4 percentage points when using the Bonferroni-corrected designs compared with the design in which only OS-data is tested. More than 60\% of this loss can be recovered by recycling the significance level allocated to hypotheses that can be rejected at the first analysis. By exploiting the joint distribution of the test statistics, even more than 75\% of this loss can be recovered by using one of the newly proposed procedures. By construction, the disjunctive power of the Bonferroni-corrected procedures and those that only recycle the significance level is the same. The new procedures also show an increase of about 0.5 percentage points with respect to this measure. Hence, this increase is only due to the exploitation of the dependence structure of the test statistics as the benefits of the graph-based closed testing procedure only take effect if one hypothesis could already be rejected. The conjunctive power is increased by about 1.5 percentage points compared to the Bonferroni-corrected designs. For the group-sequential design and especially for the designs that recycle significance level for OS directly after $H_{0,\PFS}$ has been rejected (i.e. those that use spending functions as in \eqref{eq:design_I_full_spending_early} and \eqref{eq:design_II_full_spending_early}), there is also a noticeably large probability of stopping the trial early for success when all involved hypotheses have been rejected.\\
Similar effects can be observed if the parameter $w$ in \eqref{eq:alternatives_by_weighting} is varied within the configurations of Scenario 1 in order to consider different alternatives. As above, we are particularly interested in differences in rejection proportions of $H_{0,\OS}$ and in differences in disjunctive power of the new procedures compared to the Bonferroni-corrected procedure. In Figure \ref{figure:power_comparisons} the relative difference for both quantities compared to the corresponding Bonferroni-corrected procedure are shown. For the sake of clarity we do not show values for all procedures because EX/FIRST and EX/GS/FIRST as well as EX/LAST and EX/GS/LAST perform similarly in terms of rejection proportions of $H_{0,\OS}$. Also, EX/FIRST and EX/LAST as well as EX/GS/FIRST and EX/GS/LAST are very similar in terms of disjunctive power. Furthermore, this quantity is always the same for the Bonferroni-corrected procedures and the improved procedures that can potentially recycle significance level. See Table \ref{table:results_scenario1_w=1} to get an impression of these circumstances. Gains in disjunctive power are even larger for smaller effect sizes. This holds for the comparison with the Bonferroni-corrected procedure as well as for the comparison with methods that can recycle some significance level. Of course, these gains diminish with increasing effect size as all procedures then approach a power of close to 100\%. Across all the different values for $w$ considered here, the improved procedures compensate about 2/3 of the power lost due to the Bonferroni correction for the test of $H_{0,\OS}$.
\begin{figure}
	\centering
	\includegraphics[width=\textwidth]{"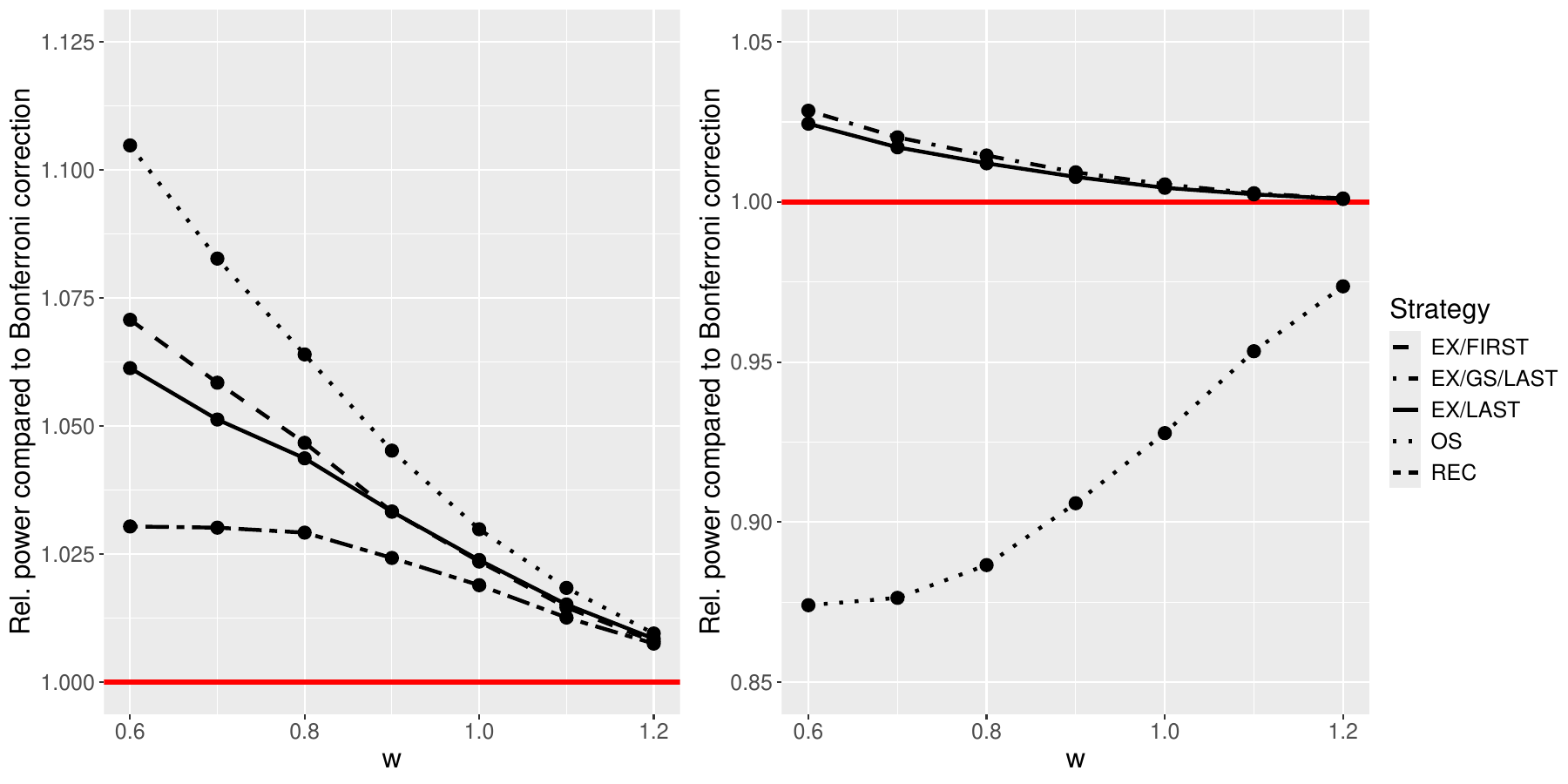"}
	\label{figure:power_comparisons}
	\caption{Left panel: Relative differences in power to reject $H_{0,\OS}$ of improved testing procedures and the OS procedure compared to the BON procedure. Right panel: Relative differences in disjunctive power of improved testing procedures and the OS procedure compared to the BON procedure.}
\end{figure}%
The other scenarios determined by Table \ref{table:idm_parameters} yield similar results and are shown in the Supplementary Material.

\section{Practical issues}\label{sec:practical_issues}

\subsection{Choice and practical use of $\alpha$-spending functions}
In a group-sequential clinical trial the alpha-spending function has to be chosen at the design stage. In our experience, if we only have one endpoint, the most prevalent choice is an O'Brien-Fleming spending function, or the Lan-DeMets approximation \citep{Lan:83} to it. This, because this alpha-spending function distributes the significance level such that early rejection is less likely. In clinical development this is often desired because the smaller amount of information for a benefit-risk assessment at an early time point is balanced, in case of early stopping, by a large observed effect.
How to assess PFS and OS in a clinical trial with interim analysis such that FWER is protected by a hierarchical multiple testing procedure has been discussed in detail in \cite{Glimm:2010}. Interest is typically in, what the authors call, an “overall hierarchical” strategy: Here, the secondary endpoint, say OS, is 
only assessed if the null hypothesis for the primary endpoint PFS is rejected. In addition, in case of non-rejection of OS it will also be assessed at further pre-specified analysis. \cite{Glimm:2010} show that FWER is only maintained if for both endpoints a group-sequential procedure is used. In this case – how should alpha-spending functions be chosen? Considerations for both the primary and secondary endpoint in this scenario are no different than described above in case of only one endpoint. The same $\alpha$-spending functions may be used for both endpoints, or alternatively, a spending function with a larger significance level at the interim analysis for the secondary endpoint. This, in order to increase the probability of stopping the entire trial early. We refer to \cite{Hung:07} as well as \cite{Tamhane:10, Tamhane:18} for further discussions.

\subsection{How to handle cutoff prediction uncertainty}
This aspect has been discussed in detail in \cite{Asikanius:2024}. For completeness we recap that discussion here in slightly abbreviated form. In group-sequential trials with a time-to-event endpoint capturing of events is not instantaneous. For example, a progression event in an oncology trial is typically not entered on the day it was detected by the treating physician, but later when the center enters the data in the trial database in batches. Further delay happens because of data cleaning and potential event adjudication. The sponsor or its data monitors checks the key data for plausibility, consistency and correctness, so that data might be subject to change, including changes to dates of events, or even the addition or removal of events. Furthermore, in large multinational trials, prospective planning and communication of timelines is required because a large number of individuals are responsible for day-to-day trial conduct. A date for the clinical cutoff date is therefore predicted based on the past occurrence of events. Because of variability (e.g. in how events happen) the number of observed events by that date will differ from the predicted, targeted number of events. When a snapshot of the cleaned database will be taken it typically happens that we do not precisely meet the targeted number of events. This results in an information fraction which is lower or higher than planned. Significance levels computed at the design stage based on the assumed information fractions are therefore recalculated according to the observed information fraction, where information fractions remain relative to the target number of events planned for the primary analysis. Recalculation is done using the $\alpha$-spending approach introduced by \cite{Lan:83}.

\subsection{Consonance of the testing procedure}
A closed testing procedure is called consonant if the rejection of the global null hypothesis leads to a rejection of at least one elementary hypothesis. Not only for the sake of interpretability of the results, consonance is a desirable property.\\
In our case, we only need to make sure that the rejection of $H_{0,\text{global}}$ also implies rejection of at least one of $H_{0,\PFS}$ and $H_{0,\OS}$. Although this seems clear at first glance, it is possible that the choice of the $\alpha$-functions $g_{\PFS}$ and $g_{\OS}$ may lead to non-consonant decisions. For the design in Section \ref{subsec:no_early_os}, consonance is achieved if and only if
\begin{enumerate}[label=(\roman*)]
    \item $\rho_{\PFS} \alpha \leq g_{\PFS}(\tau_{\PFS}(A_{\PFS}), \alpha) \quad$ and 
    \item $\xi_2 \cdot \rho_{\OS}\alpha \leq \xi_1 \cdot (\alpha - g_{\OS}(\tau_{\OS}(A_{\PFS}), \alpha))$
\end{enumerate}
are given (see Figure \ref{figure:flow_graph_I} as a reference for the inflation factors applied here). Similar conditions arise for the slightly more complex design of Section \ref{subsec:with_early_os}.\\
As noted by \cite{Anderson:2022}, a sufficient condition for consonance would be
\begin{align*}
    \frac{g_{E}(\cdot, \rho_{E} \alpha)}{g_{E}(\cdot, \alpha)} \equiv \rho_{E}
\end{align*}
for both $E \in \{\PFS,\OS\}$. These in particular apply the two conditions mentioned above. For sufficient conditions for consonance in this and and other testing procedures that might involve more than two endpoints, we refer to \cite{Anderson:2022}.

\subsection{Planning of the trial}
The power gains demonstrated in our simulations of Section \ref{sec:simulation_studies} can of course also be translated into a smaller number of events required to achieve the desired power. In \cite{Erdmann:2025}, a simulation routine was set up to determine these number of events. For the simplest design of Section \ref{subsec:no_early_os}, where OS is only tested at the interim analysis if PFS could have been rejected, the targeted number of PFS events for the interim analysis does not deviate from the numbers found in \cite{Erdmann:2025} as no additional inflation is possible in this scenario. On the contrary, the required number of OS events to achieve the desired power of 80\% to reject $H_{0,\OS}$ can be lowered when taking the potential recycling of $\rho_{\PFS}\alpha$ and the inflation of the new procedures into account. For the scenarios of Table \ref{table:idm_parameters}, the targeted number of OS events reduces to 594, 718, 703 and 919, respectively. Hence, the number can be reduced by approximately 5\% across all our scenarios.\\
We would like to emphasise once again that the specification of the entire illness-death model, as well as the assumptions about the recruitment process and possible loss to follow-up, are decisive for the overall power planning. In particular, it is possible that two different illness-death models could lead to similar marginal distributions for PFS and OS, but different dependence structures and thus correlations between the test statistics.

\section{Discussion}\label{sec:discussion}
In this paper, we propose a testing procedure for trials with endpoints that can be embedded in an illness-death multi-state model. We exemplify this for the two typical oncology endpoints, PFS and OS. Our contribution is that we fully exhaust the FWER within the multiple testing problem, resulting in power gains, in particular when both interim and final analysis are event-driven. The joint distribution of log-rank test statistics in group sequential designs from \cite{Lin:1991} is combined with a closed testing procedure that does not only address a global null hypothesis. Furthermore, we consider this in the context of event-driven censoring. For individual log-rank tests, \cite{Rühl:2023} investigated the necessary property of independent censoring for finite samples. Since we are also interested in event-driven censoring across endpoints, we use an asymptotic approach in our theoretical foundations. On this basis, we can apply the powerful framework from \cite{Anderson:2022}. Going beyond the examples mentioned there, we apply it to asymptotically normally distributed tests whose covariance structure is unknown but consistently estimable according to our theoretical results.\\
%Verallgemeinerungen
Generalisations to more than two endpoints and more than two analysis cutoffs are generally feasible and will be elaborated upon in future research. More than two endpoints might be of interest when even more events that characterise the course of disease shall be incorporated into the analysis as e.g. in the multi-state models presented in \cite{Le-Rademacher:2018, Danzer:2024}. It is also possible to perform more than one interim analysis. However, this involves an increased risk of subsequent rejection of hypotheses based on past analyses. These may then not be supported by the current data.\cite{Glimm:2010} discussed that, in practice, one should refrain from doing so, even at the expense of a slight loss of power, in order not to jeopardise the explainability of the results. Great caution must also be exercised when considering adaptations of the trial design at interim analyses as e.g. sample size recalculations. This is generally not possible here, as PFS acts as a surrogate for OS and improper use of this information can lead to an inflation of the FWER. Solutions to this problem have so far only been found by discarding some information, using worst-case estimates \citep{Magirr:2016} or making additional assumptions about the relationship between the endpoints \citep{Danzer:2024}.\\
%Estimation
So far, we focused on developing methods for hypothesis testing. \cite{Zhao:2025} presented the computation of $p$-values for the framework of \cite{Anderson:2022} that we used here. Reporting of effect measures remains challenging. \cite{Izumi:2025} investigated the bias of the estimates of hazard ratios for OS in the hierarchical design of \cite{Glimm:2010} and proposed unbiased estimates. Extensions to our more flexible designs are certainly possible. However, as pointed out by \cite{Erdmann:2025}, proportional hazards for both endpoints simultaneously appear quite implausible. As an alternative, the three transitions of the illness-death model could be targeted separately (as e.g. in \cite{Le-Rademacher:2018}). Nevertheless, the possible bias of those estimates that are caused by the sequential nature of the procedure also needs to be addressed.\\ 
%Non-PH
The challenges of non-proportional hazards can also be considered when choosing the test statistics. It is well-known that the standard log-rank test that is applied here for both endpoints is semi-parametrically optimal for proportional hazards. However, this optimality is lost when other types of deviations are present. If the type of deviation (e.g. early or late separation of survival functions) is known, a correspondingly weighted log-rank test (e.g. from the family of weights introduced by Fleming and Harrington in \cite{Harrington:1982}) can be chosen. If this cannot be anticipated in advance, combination tests as the 'max-combo' \citep{Lee:2007} or the 'mdir' test \cite{Brendel:2014} can be applied. Although we did not account for weighted tests specifically in this manuscript, an extension is generally possible. However, the correlation structure now becomes even more complex, as not only the correlation across endpoints and analysis time points must be taken into account, but also across several weighted test statistics for the individual endpoints.
%Small sample properties/Resampling
Our simulation studies in Section \ref{sec:simulation_studies} reveal that the adherence to the nominal type I error of our proposed procedures suffers for sample sizes below 500. This is analogous to the well-known anti-conservativeness of the standard log-rank test for small sample sizes \citep{Kellerer:1983}. 
\cite{Persson:2019} proposed a permutation-based approach for the one-stage testing method of \cite{Wei:1984}. An extension to group-sequential testing as in \cite{Lin:1991} and to our procedure should in general be possible. However, the crucial condition of exchangeability for such permutation procedures should be critically investigated within those efforts before applying it in practice.

\section*{Acknowledgements}
The work of MFD was funded by the Deutsche Forschungsgemeinschaft (DFG, German Research Foundation)–413730122. Parts of the calculations for this publication were performed on the HPC cluster PALMA II of the University of Münster, subsidised by the DFG (INST 211/667-1). JB acknowledges support by DFG research grant BE 4500/7-2. The authors would like to thank Marcel Wolbers for reviewing the manuscript.

\section*{Data and code availability}
The code and results of the simulation study can be accessed at \url{https://github.com/moedancer/MultSurvTrialDesign}.

\putbib
\end{bibunit}

\newpage

\appendix

\renewcommand{\thesection}{\Alph{section}}
\setcounter{figure}{0}
\setcounter{table}{0}
\renewcommand{\figurename}{Supplementary Figure}
\renewcommand{\tablename}{Supplementary Table}
\renewcommand{\thetable}{S\arabic{table}}
\renewcommand{\thefigure}{S\arabic{figure}}

\begin{bibunit}

\begin{center}
    \huge
    Supplementary Material for\\
    'Exhausting the type I error level in event-driven group-sequential designs with a closed testing procedure for progression-free and overall survival'
\end{center}

\vspace{12pt}

\section{Technical appendix}

At first, we consider the asymptotic behaviour of the event-driven analysis dates $A_{\PFS}$ and $A_{\OS}$. We define $\Delta_E\coloneqq \Delta_E(\infty)$ and $X_E\coloneqq X_E(\infty)$ for both events $E \in \{\PFS, \OS\}$. The analysis cutoffs of the respective events are given by
\begin{equation*}
	D_{\PFS}\coloneqq X_{\PFS} +  R \text{ and } D_{\OS}\coloneqq X_{\OS} +  R. 
\end{equation*}  
Let $F^{uc}_{D_{\PFS}}$ and $F^{uc}_{D_{\OS}}$ denote the subdistribution functions of the calendar dates $D_{\PFS}$ and $D_{\OS}$ that were not censored, i.e.
\begin{equation*}
	F^{uc}_{D_{\PFS}}(t) \coloneqq \mathbb{P}[D_{\PFS} \leq t; \Delta_{\PFS} = 1] \quad \text{and} \quad F^{uc}_{D_{\OS}}(t) \coloneqq \mathbb{P}[D_{\OS} \leq t; \Delta_{\OS} = 1].
\end{equation*}
We assume that those functions are continuous and we require $r_{\PFS}$ and $r_{\OS}$ to be in the interior of the images of $F^{uc}_{D_{\PFS}}$ and $F^{uc}_{D_{\OS}}$, respectively. The corresponding quantile functions $Q^{uc}_{D_{\PFS}}$ and $Q^{uc}_{D_{\OS}}$ are given by 
\begin{equation*}
	Q^{uc}_{D_{\PFS}}(p) \coloneqq \inf\left\{ t \geq 0 \colon F^{uc}_{D_{\PFS}}(t) \geq p\right\} \quad \text{and} \quad Q^{uc}_{D_{\OS}}(p) \coloneqq \inf\left\{ t \geq 0 \colon F^{uc}_{D_{\OS}}(t) \geq p\right\}.
\end{equation*}
Their empirical counterparts that are determined by the study sample are denoted by $\hat{F}^{uc}_{D_{\PFS}}$ and $\hat{F}^{uc}_{D_{\OS}}$ and respectively $\hat{Q}^{uc}_{D_{\PFS}}$ and $\hat{Q}^{uc}_{D_{\OS}}$. In this context, the analysis dates are given by
\begin{equation*}
	A_{\PFS}=\hat{Q}^{uc}_{D_{\PFS}}(r_{\PFS}) \quad \text{and} \quad A_{\OS}=\hat{Q}^{uc}_{D_{\OS}}(r_{\OS}).
\end{equation*}
The first theoretical result yields the convergence of these random calendar dates, at which the analyses are conducted, converge to deterministic values that are determined by the quantile functions given above.

\begin{lemma}\label{lemma:calendar_dates}
	The event-driven random calendar dates defined in \eqref{eq:caldate_analysis_pfs} and \eqref{eq:caldate_analysis_os}, respectively, both converge in probability to deterministic calendar dates $t_{\PFS}$ and $t_{\OS}$, defined in \eqref{eq:limits_analysis_dates}, i.e.
	\begin{equation*}
		A_{\PFS} \overset {\mathbb{P}}{\to} t_{\PFS} \quad \text{and} \quad A_{\OS} \overset {\mathbb{P}}{\to} t_{\OS}.
	\end{equation*}
\end{lemma}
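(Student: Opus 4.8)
The plan is to recognise $A_{\PFS}$ and $A_{\OS}$ as empirical quantiles and to deduce their convergence from the consistency of empirical quantiles at a point of increase. I treat $A_{\PFS}$; the argument for $A_{\OS}$ is verbatim. The first step is to rewrite the defining empirical average. Since $\Delta_{\PFS,i}(t) = \mathbbm{1}_{T_{\PFS,i}\le C_i\wedge(t-R_i)_+}$ coincides with $\mathbbm{1}_{D_{\PFS,i}\le t;\,\Delta_{\PFS,i}=1}$, the quantity $n^{-1}\sum_{i=1}^n\Delta_{\PFS,i}(t)$ appearing in \eqref{eq:caldate_analysis_pfs} is exactly the empirical subdistribution function $\hat F^{uc}_{D_{\PFS}}(t)$, whose mean is $F^{uc}_{D_{\PFS}}(t)=\mathbb{P}[T_{\PFS}\le C\wedge(t-R)_+]$. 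Consequently $A_{\PFS}=\hat Q^{uc}_{D_{\PFS}}(r_{\PFS})$ is the $r_{\PFS}$-quantile of $\hat F^{uc}_{D_{\PFS}}$, while the target $t_{\PFS}$ from \eqref{eq:limits_analysis_dates} is the corresponding population quantile $Q^{uc}_{D_{\PFS}}(r_{\PFS})$. The lemma is thus an instance of empirical quantile consistency for i.i.d.\ data, and I would lean on the empirical-process results of \cite{Shorack:2009}.

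Next I would establish uniform convergence of the empirical subdistribution function. As $\hat F^{uc}_{D_{\PFS}}$ is an average of the i.i.d.\ bounded indicators $\mathbbm{1}_{D_{\PFS,i}\le t;\,\Delta_{\PFS,i}=1}$, the class $\{(d,\delta)\mapsto \mathbbm{1}_{d\le t,\,\delta=1}:t\ge0\}$ is indexed by half-lines and hence forms a Glivenko--Cantelli (VC) class, so $\sup_{t\ge0}|\hat F^{uc}_{D_{\PFS}}(t)-F^{uc}_{D_{\PFS}}(t)|\to 0$ almost surely, and in particular in probability. Pointwise convergence at finitely many arguments would already suffice for what follows, but uniformity makes the quantile step transparent.

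The structural fact I would exploit is the switching relation $\hat Q^{uc}_{D_{\PFS}}(r_{\PFS})\le t \iff \hat F^{uc}_{D_{\PFS}}(t)\ge r_{\PFS}$, valid for the generalised inverse of any nondecreasing right-continuous function. Fixing $\epsilon>0$, this converts the event $\{|A_{\PFS}-t_{\PFS}|>\epsilon\}$ into statements about $\hat F^{uc}_{D_{\PFS}}$ evaluated at the fixed points $t_{\PFS}\pm\epsilon$. For the lower tail, the definition of $t_{\PFS}$ as an infimum forces $F^{uc}_{D_{\PFS}}(t_{\PFS}-\epsilon)<r_{\PFS}$, whence $\mathbb{P}[A_{\PFS}\le t_{\PFS}-\epsilon]=\mathbb{P}[\hat F^{uc}_{D_{\PFS}}(t_{\PFS}-\epsilon)\ge r_{\PFS}]\to 0$ by the law of large numbers. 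For the upper tail, continuity of $F^{uc}_{D_{\PFS}}$ yields $F^{uc}_{D_{\PFS}}(t_{\PFS})=r_{\PFS}$, and provided $F^{uc}_{D_{\PFS}}(t_{\PFS}+\epsilon)>r_{\PFS}$ one obtains $\mathbb{P}[A_{\PFS}>t_{\PFS}+\epsilon]=\mathbb{P}[\hat F^{uc}_{D_{\PFS}}(t_{\PFS}+\epsilon)<r_{\PFS}]\to 0$. Combining the two bounds gives $A_{\PFS}\overset{\mathbb{P}}{\to} t_{\PFS}$.

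The main obstacle is precisely the strict inequality $F^{uc}_{D_{\PFS}}(t_{\PFS}+\epsilon)>r_{\PFS}$ required in the upper tail: if $F^{uc}_{D_{\PFS}}$ were flat at level $r_{\PFS}$ just to the right of $t_{\PFS}$, sampling noise would place the empirical crossing anywhere along the plateau and convergence to the single point $t_{\PFS}$ would fail. This is where the standing assumptions that $F^{uc}_{D_{\PFS}}$ is continuous and that $r_{\PFS}$ lies in the interior of its range must be used, strengthened if necessary to local strict monotonicity of $F^{uc}_{D_{\PFS}}$ near $t_{\PFS}$, which holds whenever the relevant subdensity is positive there. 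Under such a local-increase condition $t_{\PFS}$ is the unique solution of $F^{uc}_{D_{\PFS}}(t)=r_{\PFS}$, the upper-tail bound goes through, and the identical reasoning applied to $D_{\OS}$ delivers $A_{\OS}\overset{\mathbb{P}}{\to} t_{\OS}$.
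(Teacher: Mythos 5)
Your proposal is correct, and at the top level it follows the same strategy as the paper: both identify $A_{E}$ with the empirical quantile $\hat Q^{uc}_{D_E}(r_E)$ of the uncensored-event subdistribution and both invoke the Glivenko--Cantelli theorem. The mechanics differ in the final step, however. The paper composes $\hat F^{uc}_{D_E}$ with an auxiliary independent standard normal variable $Z$, writes $\Phi\bigl(\hat Q^{uc}_{D_E}(r_E)\bigr)=\mathbb{P}\bigl[\hat F^{uc}_{D_E}(Z)<r_E\bigr]$, passes to the limit, and then inverts $\Phi$; you instead use the switching relation $\hat Q^{uc}_{D_{\PFS}}(r_{\PFS})\le t \iff \hat F^{uc}_{D_{\PFS}}(t)\ge r_{\PFS}$ directly and sandwich the event $\{|A_{\PFS}-t_{\PFS}|>\epsilon\}$ between two law-of-large-numbers statements at the fixed points $t_{\PFS}\pm\epsilon$. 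Your route is more elementary and self-contained: it needs only pointwise LLN at two points rather than uniform convergence, and it avoids the somewhat delicate bookkeeping in the paper's argument, where the probability over $Z$ is taken conditionally on the data and the displayed formulas evaluate the quantile functions at $t_E$ where $r_E$ is clearly meant. A further point in your favor is that you isolate exactly where the argument can break: the upper-tail bound needs $F^{uc}_{D_{\PFS}}(t_{\PFS}+\epsilon)>r_{\PFS}$, i.e.\ no plateau of $F^{uc}_{D_{\PFS}}$ at level $r_{\PFS}$, and you correctly observe that the paper's standing assumptions (continuity of $F^{uc}_{D_{\PFS}}$ and $r_{\PFS}$ interior to its range) do not by themselves exclude such a plateau. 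The paper needs the identical condition but buries it in the clause ``if $F^{-1}$ is continuous at $t_E$'' inside the proof; making it an explicit local strict-increase (positive subdensity) hypothesis, as you do, is the cleaner resolution.
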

\begin{proof}
	Let $E \in \{\PFS, \OS\}$. For a standard normally distributed random variable $Z$, we have
	\begin{equation*}
		\hat{F}^{uc}_{D_E}(Z) \overset{\text{a.s.}}{\to} F^{uc}_{D_E}(Z)
	\end{equation*}
	as a consequence  of the Glivenko-Cantelli theorem. For the standard normal distribution function, it follows that
	\begin{equation*}
		\Phi\left( \hat{Q}^{uc}_{D_E}(t_E) \right) = \mathbb{P}\left[\hat{F}^{uc}_{D_E}(Z) < t_E \right] \to \mathbb{P}\left[F^{uc}_{D_E}(Z) < t_E \right] = \Phi\left( Q^{uc}_{D_E}(t_E) \right)
	\end{equation*} 
	if $F^{-1}$ is continuous at $t_E$. By the continuity of $\Phi^{-1}$ it follows that $A_{\PFS} \overset {\text{a.s.}}{\to} t_E$ which in particular implies convergence in probability.
\end{proof}
Before stating the asymptotic distribution of the log-rank statistics at the event-driven analysis dates, we also require the following Lemma. For an arbitrary stopping time $A$, it is not clear that an analysis at this random calendar date is asymptotically equivalent to an analysis at the fixed calendar date to which this random date converges. Of course, well-known results (see e.g. \cite{Sellke:1983}) yield this result if the stopping time is given by a number of events of the same event that is tested. However, we also want to test $H_{0,\OS}$ at the analysis triggered by PFS events and vice versa. However, the key point of this proof is the characterization of the process of log-rank statistics in calendar time of \cite{Olschewski:1986} and the assumption of the continuity of the time-transformation that is applied to the Brownian motion. This continuity ensures that random fluctuations around the deterministic calendar date are asymptotically negligible.

\begin{lemma}\label{lemma:convergence_random_stopping_time}
	Let $(U(t))_{t\geq 0}$ be a stochastic process that converges in distribution to a time-changed Brownian motion with time-transformation $\phi$, i.e.
	\begin{equation*}
		(U(t))_{t \geq 0} \overset{\mathcal{D}}{\to} (W(\phi(t)))_{t \geq 0}
	\end{equation*}
	on the space of càdlàg functions $D([0,\tau])$ for a Brownian motion $W$ and an arbitrary $\tau > 0$. Moreover, let $A$ be a (positive) random variable with $A \overset{\mathbb{P}}{\to} t_0$ s.t. $\phi$ is continuous at $t_0$. Then we also have
	\begin{equation*}
		(U(A) - U(t_0)) \overset{\mathbb{P}}{\to} 0.
	\end{equation*} 
\end{lemma}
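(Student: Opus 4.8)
The plan is to write $U(A) - U(t_0)$ as a fixed functional of the pair $(U,A)$ and to deduce the claim from a continuous-mapping argument, the guiding principle being that evaluation of a càdlàg path is continuous exactly at the continuity points of that path. Throughout I take $\tau > t_0$, so that $t_0$ is an interior point and $A \in [0,\tau]$ with probability tending to one, and I equip $D([0,\tau])$ with the Skorokhod ($J_1$) topology, which is the natural one here since the $U$'s are step functions converging to a continuous limit.

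First I would upgrade the two given convergences to a joint statement. Since $A \overset{\mathbb{P}}{\to} t_0$ has a \emph{deterministic} limit while $U \overset{\mathcal{D}}{\to} W\circ\phi$ in $D([0,\tau])$, a Slutsky-type lemma (convergence of one coordinate to a constant forces joint convergence in distribution on a Polish product space) yields $(U,A) \overset{\mathcal{D}}{\to} (W\circ\phi,\, t_0)$ on $D([0,\tau]) \times [0,\tau]$. This step neutralises the dependence between $A$ and $U$ — which is real, since both are computed from the same data — because convergence of one coordinate to a constant gives joint convergence irrespective of the dependence structure.

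Next I would introduce $\Psi(x,a) \coloneqq x(a) - x(t_0)$ on $D([0,\tau])\times[0,\tau]$ and show it is continuous at every pair $(x,t_0)$ with $x$ continuous at $t_0$. The key pathwise fact is that if $x_n \to x$ in $J_1$ and $a_n \to t_0$ with $x$ continuous at $t_0$, then $x_n(a_n) \to x(t_0)$: taking time-change homeomorphisms $\lambda_n$ with $\|\lambda_n - \mathrm{id}\|_\infty \to 0$ and $\|x_n\circ\lambda_n - x\|_\infty \to 0$, setting $b_n \coloneqq \lambda_n^{-1}(a_n) \to t_0$, one bounds $|x_n(a_n) - x(t_0)| \le \|x_n\circ\lambda_n - x\|_\infty + |x(b_n) - x(t_0)|$, and both terms vanish, the second by continuity of $x$ at $t_0$. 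Applying this to $a_n$ and to the constant sequence $t_0$ gives $\Psi(x_n,a_n) = x_n(a_n) - x_n(t_0) \to 0 = \Psi(x,t_0)$, i.e. continuity of $\Psi$ at $(x,t_0)$.

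Finally I would observe that $W\circ\phi$ is almost surely continuous at $t_0$, since Brownian motion has continuous paths and $\phi$ is continuous at $t_0$ by hypothesis, so the composition inherits continuity there; hence the limit law $(W\circ\phi, t_0)$ is concentrated on the continuity set of $\Psi$. The continuous mapping theorem then gives $\Psi(U,A) = U(A) - U(t_0) \overset{\mathcal{D}}{\to} \Psi(W\circ\phi,t_0) = 0$, and convergence in distribution to the constant $0$ is equivalent to $U(A) - U(t_0) \overset{\mathbb{P}}{\to} 0$, which is the assertion. The main obstacle to keep in view is precisely the discontinuity of the evaluation map on $D([0,\tau])$ under $J_1$: without localising to continuity points of the limit, $\Psi$ is \emph{not} continuous, so the entire argument hinges on the assumed continuity of $\phi$ at $t_0$ (equivalently, on the limit having no jump there). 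If one prefers to avoid the product-space continuous mapping theorem and instead follow the ``negligible fluctuations'' intuition directly, the same conclusion follows from the bound $\mathbb{P}[|U(A) - U(t_0)| > \varepsilon] \le \mathbb{P}[|A - t_0| > \delta] + \mathbb{P}[\sup_{|s-t_0|\le\delta}|U(s) - U(t_0)| > \varepsilon]$, controlling the second term by passing to an almost surely convergent coupling via Skorokhod's representation theorem and invoking the same pathwise fact to make the modulus of continuity at $t_0$ vanish.
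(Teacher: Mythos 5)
Your proof is correct, but it is not the paper's primary argument: it is, in essence, the alternative that the paper only sketches in a short remark \emph{after} its proof. The paper's own proof proceeds by the elementary splitting $\mathbb{P}[|U(A)-U(t_0)|>\varepsilon] \le \mathbb{P}\bigl[\sup_{s\colon|s-t_0|\le\gamma}|U(s)-U(t_0)|>\varepsilon\bigr] + \mathbb{P}[|A-t_0|\ge\gamma]$ (which you mention only in your closing paragraph as a fallback), kills the second term using $A \overset{\mathbb{P}}{\to} t_0$, and handles the first by transferring it via the Portmanteau theorem to the limit process, where the reflection principle and the continuity of $\phi$ at $t_0$ make the bound $2\,\mathbb{P}[W(\phi(t_0+\gamma)-\phi(t_0-\gamma))>\varepsilon]$ arbitrarily small for small $\gamma$. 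Your route instead upgrades the two marginal convergences to joint convergence $(U,A)\overset{\mathcal{D}}{\to}(W\circ\phi,\,t_0)$ by the constant-limit Slutsky lemma and then applies the continuous mapping theorem to $\Psi(x,a)=x(a)-x(t_0)$, explicitly verifying $J_1$-continuity of evaluation at paths that are continuous at $t_0$; the paper's remark reaches the same conclusion but invokes the Skorokhod--Dudley almost-sure representation theorem in place of your continuity analysis, and leaves all details out. The trade-off is instructive: the paper's main argument stays at the level of one-dimensional probabilities and extracts an explicit, quantitative modulus-of-continuity bound from the Gaussian structure of the limit, whereas your argument never uses Brownianity at all --- only that the limit path is almost surely continuous at $t_0$ --- so it applies verbatim to any limit process with that property, and it isolates cleanly why the dependence between $A$ and $U$ (both computed from the same data) is harmless, namely because one coordinate of the joint limit is deterministic.
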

\begin{proof}
	It holds
	\begin{align*}
		&\mathbb{P}[|U(A) - U(t_0)| > \varepsilon]\\
		= & \mathbb{P}[|U(A) - U(t_0)| > \varepsilon; |A - t_0| < \gamma] + \mathbb{P}[|U(A) - U(t_0)| > \varepsilon; |A - t_0| \geq \gamma]\\
		\leq & \mathbb{P}\left[ \sup_{s\colon |s-t_0| \leq \gamma} |U(s) - U(t_0)| > \varepsilon \right] + \mathbb{P}[|A - t_0| \geq \gamma]
	\end{align*}
	for any $\gamma > 0$. For any fixed $\gamma$ the second summand vanishes as $A$ converges in probability to $t$ by our assumptions.\\
	For any fixed $\gamma$, we consider for the standard Brownian motion $W$ the probability
	\begin{align*}
		&\mathbb{P}\left[ \sup_{s\colon |s-t_0| \leq \gamma} |W(\phi(s)) - W(\phi(t_0))| > \varepsilon \right]\\
		=&2 \cdot \mathbb{P}[W(\phi(t_0 + \gamma) - \phi(t_0 - \gamma)) > \varepsilon]
	\end{align*}
	that is obtained by the reflection principle. By the continuity assumption on $\phi$, if $\gamma$ is small enough, this probability can get arbitrarily small. From the Portmanteau Theorem (see e.g. Lemma 2.2 (vii) in \cite{vanderVaart:2000}), we obtain 
	\begin{equation*}
		\mathbb{P}\left[ \sup_{s\colon |s-t_0| \leq \gamma} |U(s) - U(t_0)| > \varepsilon \right] \to 2 \cdot \mathbb{P}[W(\phi(t + \gamma) - \phi(t - \gamma)) > \varepsilon].
	\end{equation*}
	This concludes the proof because now, we can choose $\gamma$ small enough s.t. for some $\delta > 0$, the probabilities $\mathbb{P}[|A - t_0| \geq \gamma]$ and $2 \cdot \mathbb{P}[W(\phi(t + \gamma) - \phi(t - \gamma)) > \varepsilon]$ are both smaller than $\delta/3$ and we can choose $n$ large enough s.t. 
	\begin{equation*}
		\left|\mathbb{P}\left[ \sup_{s\colon |s-t_0| \leq \gamma} |W(\phi(s)) - W(\phi(t_0))| > \varepsilon \right] - 2 \cdot \mathbb{P}[W(\phi(t + \gamma) - \phi(t - \gamma)) > \varepsilon]\right| < \delta/3.
	\end{equation*}
\end{proof}
An alternative proof considers joint convergence in distribution of $(U, A)$, mapped onto $(U(A)-U(t_0))$. Replacing convergence in distribution with a.s. convergence of representations equal in distribution using the Skorokhod-Dudley almost sure representation theorem, one finds a.s. convergence of the difference of interest to zero. This implies convergence to zero in distribution, and by construction convergence in distribution of the original $(U(A)-U(t_0))$ to zero. The latter is the desired result, since convergence in distribution to zero is convergence in probability.\\
Now, we can state the asymptotic distribution of the log-rank statistics. We connect the preceding Lemma with standard arguments that were also applied in \cite{Tsiatis:1981, Lin:1991} to obtain the asymptotic distribution at the random analysis dates.

\begin{theorem}\label{thm:main_convergence}
	Under the strict null hypothesis of equal distribution of PFS and OS in both groups, the joint distribution of PFS and OS log-rank statistics evaluated at event-driven analysis dates $A_{\PFS}$ and $A_{\OS}$, respectively, converges in distribution to a joint normal distribution with components of the covariance matrix as introduced above, i.e. 
	\begin{equation*}
		\mathbf{U}_{\PFS, \OS} \coloneqq (U_{\PFS}(A_{\PFS}), U_{\OS}(A_{\PFS}), U_{\PFS}(A_{\OS}), U_{\OS}(A_{\OS})) \overset{\mathcal{D}}{\to} \mathcal{N}(0, \boldsymbol{\Sigma}_{\PFS, \OS})
	\end{equation*}
	with
	\begin{equation*}
		\boldsymbol{\Sigma}_{\PFS, \OS} = 
		\begin{pmatrix}
			\sigma^2_{\PFS}(t_{\PFS}) & \sigma_{\PFS, \OS}(t_{\PFS}, t_{\PFS}) & \sigma^2_{\PFS}(t_{\PFS}) & \sigma_{\PFS, \OS}(t_{\PFS}, t_{\OS})\\
			\sigma_{\PFS, \OS}(t_{\PFS}, t_{\PFS}) & \sigma^2_{\OS}(t_{\PFS}) & \sigma_{\PFS, \OS}(t_{\OS}, t_{\PFS}) & \sigma^2_{\OS}(t_{\PFS})\\
			\sigma^2_{\PFS}(t_{\PFS}) & \sigma_{\PFS, \OS}(t_{\OS}, t_{\PFS}) & \sigma^2_{\PFS}(t_{\OS}) & \sigma_{\PFS, \OS}(t_{\OS}, t_{\OS})\\
			\sigma_{\PFS, \OS}(t_{\PFS}, t_{\OS}) & \sigma^2_{\OS}(t_{\PFS}) & \sigma_{\PFS, \OS}(t_{\OS}, t_{\OS}) & \sigma^2_{\OS}(t_{\OS})
		\end{pmatrix}
	\end{equation*}
\end{theorem}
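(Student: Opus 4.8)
The plan is to prove the statement in two stages: first establish joint asymptotic normality of the four log-rank statistics when they are evaluated at the \emph{deterministic} calendar dates $t_{\PFS}$ and $t_{\OS}$, and then argue that replacing these fixed dates by the random event-driven cutoffs $A_{\PFS}$ and $A_{\OS}$ leaves the limiting distribution unchanged. Writing $\tilde{\mathbf{U}} \coloneqq (U_{\PFS}(t_{\PFS}), U_{\OS}(t_{\PFS}), U_{\PFS}(t_{\OS}), U_{\OS}(t_{\OS}))$ for the fixed-time analogue of $\mathbf{U}_{\PFS, \OS}$, the strategy is to show $\tilde{\mathbf{U}} \overset{\mathcal{D}}{\to} \mathcal{N}(0, \boldsymbol{\Sigma}_{\PFS, \OS})$ and separately $\mathbf{U}_{\PFS, \OS} - \tilde{\mathbf{U}} \overset{\mathbb{P}}{\to} 0$; Slutsky's theorem then yields the claim.

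For the first stage I would rely on the asymptotic equivalence of $U_E(t)$ and the martingale-integral process $u_E(t)$ noted above, valid under the strict null by the arguments of \cite{Tsiatis:1981, Sellke:1983, Lin:1991}. At each fixed calendar date the representation of $u_E(t)$ is a sum of i.i.d.\ centered per-patient contributions, so a multivariate central limit theorem applied to the four-dimensional vector of these contributions delivers joint normality. The entries of $\boldsymbol{\Sigma}_{\PFS, \OS}$ are then identified by computing the limiting covariances: the cross-endpoint covariances $\sigma_{\PFS, \OS}(\cdot, \cdot)$ arise from the Wei--Lin calculation for dependent time-to-event endpoints \cite{Wei:1984, Lin:1991}, while the same-endpoint, different-time entries reduce to the variance at the \emph{earlier} date because the log-rank process has asymptotically independent increments in calendar time. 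This is exactly the Brownian structure $\text{Cov}(W(\phi(s)), W(\phi(t))) = \phi(s\wedge t)$ underlying the entries $\sigma^2_{\PFS}(t_{\PFS})$ and $\sigma^2_{\OS}(t_{\PFS})$ that appear in the off-diagonal blocks of the matrix.

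For the second stage I would apply Lemma \ref{lemma:convergence_random_stopping_time} coordinate by coordinate. Lemma \ref{lemma:calendar_dates} already provides $A_{\PFS} \overset{\mathbb{P}}{\to} t_{\PFS}$ and $A_{\OS} \overset{\mathbb{P}}{\to} t_{\OS}$. By the characterisation of \cite{Olschewski:1986}, each univariate process $(U_E(t))_{t \geq 0}$ converges in distribution on $D([0,\tau])$ to a Brownian motion time-changed by $\phi_E = \sigma_E^2$; provided $\phi_E$ is continuous at the relevant limit date, Lemma \ref{lemma:convergence_random_stopping_time} yields $U_E(A_{E'}) - U_E(t_{E'}) \overset{\mathbb{P}}{\to} 0$ for every pairing $(E, E') \in \{\PFS,\OS\}^2$. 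Collecting the four differences establishes $\mathbf{U}_{\PFS, \OS} - \tilde{\mathbf{U}} \overset{\mathbb{P}}{\to} 0$.

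The main obstacle is the cross-endpoint evaluation, namely the second and third coordinates $U_{\OS}(A_{\PFS})$ and $U_{\PFS}(A_{\OS})$, where a statistic for one endpoint is read off at a cutoff triggered by the other endpoint's events. For such pairings $A_{\PFS}$ is not a stopping time determined by the events of the process being tested, so the classical random-time substitution results for sequentially computed rank statistics (e.g.\ \cite{Sellke:1983}) do not apply directly. This is precisely the gap bridged by Lemma \ref{lemma:convergence_random_stopping_time}, whose hypotheses require only functional weak convergence together with a continuous time-transformation, rather than any stopping-time property. Consequently the key remaining technical point reduces to verifying continuity of $\phi_E$ at $t_{\PFS}$ and $t_{\OS}$, which follows from the assumed continuity of the relevant subdistribution functions, ensuring that the random fluctuations of $A_E$ around its limit are asymptotically negligible.
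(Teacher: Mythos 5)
Your proposal is correct and follows essentially the same route as the paper's proof: both reduce the random-cutoff vector to its fixed-date counterpart via the triangle inequality, invoke Lemma \ref{lemma:calendar_dates} and Lemma \ref{lemma:convergence_random_stopping_time} (with the Olschewski-type functional convergence and continuity of the time transformation inherited from the continuous subdistribution functions) to kill the differences $U_E(A_{E'}) - U_E(t_{E'})$ coordinate-wise, appeal to the Tsiatis/Sellke/Lin equivalence and the Wei--Lin limit theorem for the joint normality at the deterministic dates, and finish with a Slutsky-type argument. Your explicit identification of the cross-endpoint readouts $U_{\OS}(A_{\PFS})$ and $U_{\PFS}(A_{\OS})$ as the step where classical stopping-time results fail, and of Lemma \ref{lemma:convergence_random_stopping_time} as the bridge, matches the paper's own reasoning exactly.
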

\begin{proof}
	First, note that $\mathbf{U}_{\PFS, \OS}$ is asymptotically equivalent to 
	\begin{equation*}
		\mathbf{u}_{\PFS,\OS} \coloneqq (u_{\PFS}(t_{\PFS}), u_{\OS}(t_{\PFS}), u_{\PFS}(t_{\OS}), u_{\OS}(t_{\OS})),
	\end{equation*}
	i.e. their difference vanishes in probability as $n \to \infty$. To see this, note first, that convergence in probability of a vector reduces to convergence in probability of its components (see Theorem 2.7 (vi) of \cite{vanderVaart:2000}). Let $U$ resp. $u$ denote one component of those vectors and $A$ and $t_0$ denote the random calendar time and its limit, respectively. Then, we have	
	\begin{equation*}
		|U(A) - u(t_0)| \leq |U(A) - U(t_0)| + |U(t_0) - u(t_0)|.
	\end{equation*}
	The standard theory for sequential analysis of log-rank tests (see e.g. \cite{Tsiatis:1981}) yields convergence to 0 in probability for the second summand. Hence, it remains to show convergence in probability of the first summand. Therefor, we check the prerequesites of Lemma \ref{lemma:convergence_random_stopping_time} to obtain convergence of the first summand. Lemma \ref{lemma:calendar_dates} yields convergence of $A$. The required convergence in distribution of $U$ is stated in \cite{Sellke:1983} and (in calendar time) in \cite{Olschewski:1986}. The time transformation is given by $F^{uc}_{D_{\PFS}}$ or $F^{uc}_{D_{\OS}}$, respectively. Their continuity is guaranteed by the standard assumptions (i.e. absolutely continuous distribution and independence of survival time, recruitment date and time to drop-out).\\
	Now, it is shown in \cite{Lin:1991} that $\mathbf{u}_{\PFS, \OS}$ converges in distribution to the normal distribution with the covariance matrix shown above. Finally, we apply Theorem 2.7 (iv) of \cite{vanderVaart:2000} to obtain the same convergence for $\mathbf{U}_{\PFS,\OS}$.
\end{proof}

Now, we address the estimation of $\boldsymbol{\Sigma}_{\PFS, \OS}$. Basically, the proof follows the lines of \cite{Wei:1984}. However, a little more complexity is added, since we consider multi-stage designs and we need statements about the uniform convergence of different processes in a range around the the limiting, fixed calendar dates due to the analysis at the random dates. We obtain these from the theory of empirical processes (see e.g. \cite{Shorack:2009}). For the sake of simplicity, we denote 

\begin{equation*}
	\mu_{E}(t,s) \coloneqq 1 - \frac{y_E^{Z=1}(t,s)}{y_E(t,s)} \quad \text { and } \quad \hat{\mu}_{E}(t,s) \coloneqq 1 - \frac{Y_E^{Z=1}(t,s)}{Y_E(t,s)}.
\end{equation*}
Under the strict null hypothesis of equal distributions of both endpoints in both groups and standard assumptions of equal censoring in both groups, $\mu$ amounts to $1/2$ for any $s < t$.\\
By $\phi$ and $\psi$ we denote the quantity that is obtained when $\mu$ or its respective counterpart $1 - \mu$ is integrated w.r.t. the hazard function, i.e.
\begin{equation*}
	\psi_{E}(t,s) \coloneqq \int_0^s \mu_E(t,u) \Lambda_E(du) \quad \text { and } \quad \hat{\psi}_{E}(t,s) \coloneqq \int_0^s \hat{\mu}_E(t,u) \hat{\Lambda}_E(t,du)
\end{equation*}
and 
\begin{equation*}
	\phi_{E}(t,s) \coloneqq \int_0^s (1 - \mu_E(t,u)) \Lambda_E(du) \quad \text { and } \quad \hat{\phi}_{E}(t,s) \coloneqq \int_0^s (1 - \hat{\mu}_E(t,u)) \hat{\Lambda}_E(t,du),
\end{equation*}
respectively. Under the same assumptions, $\psi(t,s)$ and $\phi(t,s)$ amount to $\Lambda(s)/2$. Additionally, we define
\begin{align*}
	\eta_{E_1, E_2}(t_1, t_2, s) \coloneqq & \int_{0}^{t_1} \int_0^{t_2} \mu_{E_1}(t_1, u) \mathbbm{1}_{v \geq s} \; d\mathbb{P}[X_{E_1}(t_1) \leq u, \Delta_{E_1}(t_1) = 1, X_{E_2}(t_2) \leq u]\\
	= & \mathbb{E}[\mu_{E_1}(t_1, X_{\PFS}(t_1)) \cdot \Delta_{E_1}(t_1) \cdot Y_{E_2}(t_2,s)]
\end{align*}
and its estimator
\begin{equation*}
	\hat{\eta}_{E_1, E_2}(t_1, t_2, s) \coloneqq \frac{1}{n} \sum_{i=1}^n \Delta_{E_1,i}(t_1) \cdot \hat{\mu}_{E_1}(t_1, X_{E_1,i}(t_1))\cdot Y_{E_2,i}(t_2,s)
\end{equation*}
for $E_1, E_2 \in \{\PFS, \OS\}$ and $E_1 \neq E_2$. Note that a uniform bound on $\hat{\mu}$ by some constant $C$ also implies a uniform bound on $\hat{\eta}$ of $C \cdot Y_{E_2}(t_2, s)/n$.
\begin{theorem}\label{theorem:consistency_variance}
	The components of the asymptotic covariance matrix of $\mathbf{U}_{\PFS, \OS}$ can be consistently estimated. The corresponding estimates are given as follows.
	\begin{enumerate}[label=(\roman*)]
		\item\label{item:within_endpoint} For components that refer to the same endpoint, i.e. those of the form $\sigma^2_{E}(t)$ for $E \in \{\PFS, \OS\}$ we can apply standard variance estimates for log-rank statistics. When analysed at the random analysis date $A$ with $A \overset{\mathbb{P}}{\to} t$, this amounts to
		\begin{equation*}
			\hat{\sigma}^2_{E_1}(A) \coloneqq \frac{1}{n} \sum_{i=1}^n \int_0^{A} \frac{Y_{E_1}^{Z=1}(A,s)}{Y_{E_1}(A,s)}  \left( 1 - \frac{Y_{E_1}^{Z=1}(A,s)}{Y_{E_1}(A,s)}   \right) N_{E_1,i}(A,ds).
		\end{equation*}
		\item\label{item:between_endpoints} For covariance estimates for endpoints $E_1 \neq E_2$ that are analysed at random analysis dates $A_1$ and $A_2$ with $(A_1, A_2) \overset{\mathbb{P}}{\to} (t_1, t_2)$, respectively, the covariance $\sigma_{E_2, E_1}(t_1, t_2)$ is estimated by
		\begin{align*}
			&\hat{\sigma}_{\PFS, \OS}(A_1, A_2) \\
			\coloneqq & \frac{1}{n} \sum_{\substack{i=1 \\ Z_i = 1}}^{n} \Bigg( \left( \hat{\mu}_{\PFS}(A_1, X_{\PFS,i}(A_1)) \Delta_{\PFS,i}(A_1) - \hat{\psi}_{\PFS}(A_1, X_{\PFS,i}(A_1)) \right) \cdot \\
			&\qquad \qquad \left( \hat{\mu}_{\OS}(A_2, X_{\OS,i}(A_2)) \Delta_{\OS,i}(A_2) - \hat{\psi}_{\OS}(A_2, X_{\OS,i}(A_2)) \right) \Bigg)\\
			&\quad + \frac{1}{n} \sum_{\substack{i=1 \\ Z_i = 0}}^{n} \Bigg( \left( [1 - \hat{\mu}_{\PFS}(A_1, X_{\PFS,i}(A_1))] \Delta_{\PFS,i}(A_1) - \hat{\phi}_{\PFS}(A_1, X_{\PFS,i}(A_1)) \right) \cdot \\
			&\qquad \qquad \left( [1-\hat{\mu}_{\OS}(A_2, X_{\OS,i}(A_2))] \Delta_{\OS,i}(A_2) - \hat{\phi}_{\OS}(A_2, X_{\OS,i}(A_2)) \right) \Bigg)
		\end{align*}
	\end{enumerate}
\end{theorem}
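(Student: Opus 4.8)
The plan is to reduce both claims to the pointwise consistency results available at \emph{fixed} calendar dates in \cite{Wei:1984, Lin:1991} and to bridge the gap to the \emph{random} analysis dates $A$, $A_1$, $A_2$ by a uniform-convergence argument combined with a random-time substitution. I would work under the strict null, where the deterministic limits simplify to $\mu_E \equiv 1/2$ and $\psi_E = \phi_E = \Lambda_E/2$, but the architecture of the proof does not exploit these simplifications.

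First I would treat the fixed-date/moment level. For part \ref{item:within_endpoint}, at a deterministic $t$ the quantity $\hat{\sigma}^2_E(t)$ is the ordinary Nelson--Aalen log-rank variance estimator, whose consistency $\hat{\sigma}^2_E(t) \overset{\mathbb{P}}{\to} \sigma^2_E(t)$ is classical \citep{Tsiatis:1981}. For part \ref{item:between_endpoints}, I would show that once the random dates are replaced by their limits and the estimated integrands $\hat{\mu}_E, \hat{\psi}_E, \hat{\phi}_E, \hat{\Lambda}_E$ are replaced by $\mu_E, \psi_E, \phi_E, \Lambda_E$, the expression becomes an i.i.d. average whose common summand has expectation exactly $\sigma_{\PFS,\OS}(t_1, t_2)$. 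This is the moment computation of \cite{Wei:1984}: for a patient with $Z_i=1$ the bracket $\mu_{E}\Delta_{E,i}-\psi_{E}$ is precisely the per-patient martingale integral $\int (Z_i - y_E^{Z=1}/y_E)\, M_{E,i}(ds)$, for $Z_i=0$ the corresponding $(1-\mu_E)$-version appears, and since the two endpoint martingales for one patient share the same per-subject filtration, the expectation of their product reproduces the cross-covariance while distinct-patient pairs vanish by independence. Splitting the sum by treatment group is what makes these residuals align with the group-specific weights $Z_i - y_E^{Z=1}/y_E$.

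The substantive work is upgrading the convergence of the plug-in processes to \emph{uniform} convergence over a product neighbourhood $[t-\gamma, t+\gamma] \times [0,\tau]$ in both the calendar-time argument and the trial-time argument, and then discarding the randomness of the cutoff. For the uniformization I would invoke empirical-process theory as in \cite{Shorack:2009}: the classes of indicator and bounded functions indexed by $(t,s)$ entering $\tfrac1n Y_E$, $\tfrac1n Y_E^{Z=g}$ and $\tfrac1n N_E$ are Glivenko--Cantelli, giving uniform convergence to $y_E$, $y_E^{Z=g}$ and the corresponding mean, whence $\hat{\mu}_E \to \mu_E$ uniformly and, after a uniform continuous-mapping step for the integral functionals, $\hat{\psi}_E \to \psi_E$, $\hat{\phi}_E \to \phi_E$, $\hat{\Lambda}_E \to \Lambda_E$ uniformly as well; the uniform bound on $\hat{\eta}$ recorded before the statement keeps these integrals controlled. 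With uniform convergence to continuous deterministic limits secured, I would apply the elementary substitution lemma: if $g_n \to g$ uniformly on a neighbourhood of $t_0$ with $g$ continuous and $A_n \overset{\mathbb{P}}{\to} t_0$, then $g_n(A_n) \overset{\mathbb{P}}{\to} g(t_0)$. Combined with Lemma \ref{lemma:calendar_dates} this replaces $A$ by $t$ and $(A_1, A_2)$ by $(t_1, t_2)$, leaving an i.i.d. average of bounded summands to which a weak law of large numbers applies, yielding convergence in probability to the expectation from the previous step. Assembling the individual components gives $\hat{\boldsymbol{\Sigma}}_{\PFS, \OS} \overset{\mathbb{P}}{\to} \boldsymbol{\Sigma}_{\PFS, \OS}$.

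The main obstacle is the uniform-convergence step for the integral functionals $\hat{\psi}_E, \hat{\phi}_E$ jointly in the calendar-time and trial-time arguments, because event-driven censoring couples time-to-event and time-to-censoring and makes the at-risk and counting processes depend on the calendar date $t$ in a non-trivial way. Near the right tail a vanishing denominator $Y_E(t,s)$ could destabilise $\hat{\mu}_E$, so I would restrict $s$ to the region where $y_E(t,s)$ is bounded below (guaranteed by the standard assumptions that place $r_E$ in the interior of the image of $F^{uc}_{D_E}$), and then argue uniform convergence of the map sending $(\hat{\mu}_E, \hat{\Lambda}_E)$ to its integral via a uniform continuous-mapping argument. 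Once this uniformity and the continuity of the limits are in place, the remaining steps are routine.
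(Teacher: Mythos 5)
Your proposal is correct and follows essentially the same route as the paper's own proof: reduce to the fixed-date moment identity of \cite{Wei:1984}, establish uniform (in calendar and trial time) convergence of the plug-in quantities $\hat{\mu}_E$, $\hat{\Lambda}_E$, $\hat{\psi}_E$, $\hat{\phi}_E$ via empirical-process theory, and then discard the randomness of the cutoffs using Lemma \ref{lemma:calendar_dates}. The only difference is organisational: where you invoke a single substitution lemma plus a law of large numbers, the paper carries out the same content explicitly through a four-term expansion of the covariance, a $\tau$-truncation handling events observed close to the random analysis dates, Fubini with the auxiliary function $\hat{\eta}_{E_1,E_2}$, and a final continuous-mapping/LLN step, drawing the uniformity results from \cite{Gu:1991}, \cite{Bilias:1997} and Section 26 of \cite{Shorack:2009}.
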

\begin{proof}
	For both parts of the proof, we want to remind that $A_{E} \overset{\mathbb{P}}{\to} t_{E}$ for $E \in \{\PFS, \OS\}$ and hence also $(A_{\PFS}, A_{\OS}) \overset{\mathbb{P}}{\to} (t_{\PFS}, t_{\OS})$.\\
	\underline{Components of the form \ref{item:within_endpoint}:}\\
	For components that refer to the same endpoint, we can use the well-known variance estimator of log-rank test statistics. We refer to well-known results of sequential analysis of log-rank statistics, as e.g. from \cite{Sellke:1983} to show that this convergence is uniform. Hence, we can pass from $A_{E}$ to $t_{E}$ in the limit of $n \to \infty$.\\
	\underline{Components of the form \ref{item:between_endpoints}:}\\
	This proof follows along similar lines as the one of \cite{Wei:1984}. However, we have to apply a bit more caution as we also have to deal with the random analysis dates $A_E$ when estimating correlations. This makes the use of the following results necessary:
	\begin{enumerate}[label=(\alph*)]
		\item\label{item:unif_conv_1} For any $c > 0$ and $E \in \{\PFS, \OS\}$, on any compact $D^{\star}$ subset of
		\begin{equation*}
			D^{E}_{c} \coloneqq \{(t,s) \colon t \in [t_{E_1}-c, t_{E_1}+c], s < t\}
		\end{equation*}
		it holds
		\begin{equation*}
			\sup_{(t,s) \in D^{\star}} |\hat{\mu}_{E}(t,s) - \hat{\mu}_{E}(t,s)| \overset{\mathbb{P}}{\to} 0
		\end{equation*}
		and
		\begin{equation*}
			\sup_{(t,s) \in D^{\star}} |\hat{\Lambda}_{E}(t,s) - \hat{\Lambda}_{E}(s)| \overset{\mathbb{P}}{\to} 0
		\end{equation*}
		\item\label{item:unif_conv_2} For any $c > 0$, $E_1, E_2 \notin \{\PFS, \OS\}$ with $E_1 \neq E_2$ we have on the set
		\begin{equation*}
			\bar{D}^{E_1, E_2}_{c} \coloneqq \{(t,s) \colon t_1 \in [t_{E_1}-c, t_{E_1}+c], t_2 \in [t_{E_2}-c, t_{E_2}+c], s \leq t_2\}
		\end{equation*}
		the uniform convergence
		\begin{equation*}
			\sup_{(t,s) \in \bar{D}^{E_1, E_2}_{c}} |\hat{\eta}_{E_1, E_2}(t_1, t_2 ,s) - \eta_{E_1, E_2}(t_1, t_2, s)| \overset{\mathbb{P}}{\to} 0.
		\end{equation*}
	\end{enumerate}
	For proof of \ref{item:unif_conv_1}, we refer to Example 2 of \cite{Gu:1991} and Remark 2, Proof of Theorem 2.2 and Lemma A.3 of \cite{Bilias:1997}. In particular, Example 2 of \cite{Gu:1991} also enables the use of weighted log-rank statistics, e.g. those of the Fleming-Harrington class. For the sake of simplicity, we restrict ourselves to the consideration of standard log-rank tests here. The basic idea is the combination of pointwise convergence (as given by standard result in \cite{Andersen:2012}) with tightness of the sequence of measures. Following Prokhorov's Theorem this implies the uniform convergences given above. The proof of \ref{item:unif_conv_2} follows from the convergence result of $\hat{\mu}$, the uniform bound of $\hat{\mu}$ and uniform convergence of (multivariate) empirical measures as presented e.g. in Section 26 of \cite{Shorack:2009}. To see how this results are applied, we decompose as in \cite{Wei:1984}:
	\begin{align*}
		&|\hat{\eta}_{E_1, E_2}(t_1, t_2, s) - \eta_{E_1, E_2}(t_1, t_2, s)|\\
		\leq&\left| \frac{1}{n} \int_{\substack{u \in [0,t_1] \\ v \in [0,t_2]}} \hat{\mu}_{E_1}(t_1, u) \mathbbm{1}_{s \leq v} - \mu_{E_1}(t_1, u) \mathbbm{1}_{s \leq v} \; d\left[\sum_{i=1}^n Y_{E_1,i}(t_1, u) \cdot \Delta_{E_1,i}(t_1) \cdot Y_{E_2,i}(t_2, v)\right] \right|\\
		& + \left| \frac{1}{n} \sum_{i=1}^n \mu_{E_1}(t_1, X_{E_1,i}(t_1)) \cdot \Delta_{E_1,i} \cdot Y_{E_2,i}(t_1,s) - \eta(t_1, t_2, s) \right|
	\end{align*}
	For the first part, uniform convergence of $\hat{\mu}$ on compact subspaces and uniform boundedness can be applied. The second summand is a difference between an expected value and its empirical counterpart which can be uniformly bounded by well-known results of empirical process theory (see Section 26 of \cite{Shorack:2009}).\\
	As in the Appendix of \cite{Wei:1984} we can decompose $\sigma_{\PFS, \OS}(t_{E_1}, t_{E_2})$ into two summands. One concerns observation from group $Z=0$ and the other one those from group $Z=1$. Those can be plugged together to obtain the complete variance. Without loss of generality, we focus on the group $Z=1$. Hence, the following probability statements can all be considered as restricted to $Z=1$. As in \cite{Wei:1984}, this quantity, which we call $\sigma^{Z=1}_{\PFS, \OS}(t_{E_1}, t_{E_2})$, can be decomposed as follows:
	\begin{align}
		\sigma^{Z=1}_{\PFS, \OS}(t_1, t_2)=&\mathbb{E}[\Delta_{\PFS}(t_1) \Delta_{\OS}(t_2) \mu_{\PFS}(t_1, X_{\PFS}(t_1)) \mu_{\OS}(t_2, X_{\OS}(t_2))] \label{eq:limit_1}\\
		&-\mathbb{E}[\Delta_{\PFS}(t_1) \mu_{\PFS}(t_1, X_{\PFS}(t_1)) \psi_{\OS}(t_2, X_{\OS}(t_2))] \label{eq:limit_2}\\
		&-\mathbb{E}[\Delta_{\OS}(t_2) \mu_{\OS}(t_2, X_{\OS}(t_1)) \psi_{\PFS}(t_1, X_{\PFS}(t_1))] \label{eq:limit_3}\\
		&+\mathbb{E}[\psi_{\PFS}(t_1, X_{\PFS}(t_1))\psi_{\OS}(t_2, X_{\OS}(t_2))]. \label{eq:limit_4}
	\end{align}
	Analogously, we can write the corresponding part to estimate this quantity by
	\begin{align}
		\hat{\sigma}^{Z=1}_{\PFS, \OS}(t_1, t_2)=&\frac{1}{n} \sum_{\substack{i=1 \\ Z_i=1}}^n \hat{\mu}_{\PFS}(A_1, X_{\PFS,i}(A_1))\Delta_{\PFS,i}(A_1)\hat{\mu}_{\OS}(A_2, X_{\OS,i}(A_2))\Delta_{\OS,i}(A_2) \label{eq:emp_1}\\
		& - \frac{1}{n} \sum_{\substack{i=1 \\ Z_i=1}}^n \hat{\mu}_{\PFS}(A_1, X_{\PFS,i}(A_1))\Delta_{\PFS,i}(A_1) \hat{\psi}_{\OS}(A_2, X_{\OS,i}(A_2)) \label{eq:emp_2}\\
		& - \frac{1}{n} \sum_{\substack{i=1 \\ Z_i=1}}^n \hat{\mu}_{\OS}(A_2, X_{\OS,i}(A_2))\Delta_{\OS,i}(A_2) \hat{\psi}_{\PFS}(A_1, X_{\PFS,i}(A_1)) \label{eq:emp_3} \\
		& + \frac{1}{n} \sum_{\substack{i=1 \\ Z_i=1}}^n \hat{\psi}_{\PFS}(A_1, X_{\PFS,i}(A_1)) \hat{\psi}_{\OS}(A_2, X_{\OS,i}(A_2)) \label{eq:emp_4}
	\end{align}
	We consider all of the summands separately:\\
	\underline{Convergence of \eqref{eq:emp_1} to \eqref{eq:limit_1}:}\\
	We denote the difference of the two terms by $W$. For an arbitrarily small $\varepsilon > 0$, we have to show $\mathbb{P}[|W > \varepsilon|] \to 0$. In particular, we want to show that $\mathbb{P}[|W > \varepsilon|] < \delta$ for some arbitrarily small $\delta>0$ for $n$ big enough. We can split up
	\begin{align*}
		\mathbb{P}[|W| > \varepsilon]=&\mathbb{P}[|W|>\varepsilon; A_1 \in [t_1 - \gamma, t_1 + \gamma] \text{ and } A_2 \in [t_2 - \gamma, t_2 + \gamma]]\\
		&+\mathbb{P}[|W|>\varepsilon; A_1 \notin [t_1 - \gamma, t_1 + \gamma] \text{ and } A_2 \notin [t_2 - \gamma, t_2 + \gamma]].
	\end{align*}
	The second summand is dominated by $\mathbb{P}[A_1 \notin [t_1 - \gamma, t_1 + \gamma] \text{ and } A_2 \notin [t_2 - \gamma, t_2 + \gamma]]$ and by the convergence of $(A_1, A_2)$ it becomes arbitrarily small for any fixed $\gamma$ as $n \to \infty$. Hence, we can restrict ourselves to considerations conditional on the event in the first summand.\\
	Both \eqref{eq:emp_1} and \eqref{eq:limit_1} can be split up into summands that refer to events that happen earlier and later in calendar time. More explicitly, we can write
	\begin{align*}
		&\mathbb{E}[\Delta_{\PFS}(t_1) \Delta_{\OS}(t_2) \mu_{\PFS}(t_1, X_{\PFS}(t_1)) \mu_{\OS}(t_2, X_{\OS}(t_2))]\\
		=&\mathbb{E}[\Delta_{\PFS}(t_1 - \tau) \Delta_{\OS}(t_2 - \tau) \mu_{\PFS}(t_1, X_{\PFS}(t_1)) \mu_{\OS}(t_2, X_{\OS}(t_2))]\\
		&+\mathbb{E}[\mu_{\PFS}(t_1, X_{\PFS}(t_1))\mu_{\OS}(t_2, X_{\OS}(t_2)); \Delta_{\PFS}(t_1) - \Delta_{\PFS}(t_1 - \tau) = 1 \text{ or } \Delta_{\OS}(t_2) - \Delta_{\OS}(t_2 - \tau) = 1]
	\end{align*}
	We can choose $\tau$ small enough s.t. the second summand is smaller then $\varepsilon/5$. Analogously, this can be done for \eqref{eq:emp_1} by
	\begin{align*}
		&\frac{1}{n} \sum_{\substack{i=1 \\ Z_i=1}}^n \hat{\mu}_{\PFS}(A_1, X_{\PFS,i}(A_1))\Delta_{\PFS,i}(A_1)\hat{\mu}_{\OS}(A_2, X_{\OS,i}(A_2))\Delta_{\OS,i}(A_2)\\
		=&\frac{1}{n} \sum_{\substack{i=1 \\ Z_i=1}}^n \hat{\mu}_{\PFS}(A_1, X_{\PFS,i}(A_1))\Delta_{\PFS,i}(A_1 - \tau)\hat{\mu}_{\OS}(A_2, X_{\OS,i}(A_2))\Delta_{\OS,i}(A_2 - \tau)\\
		&+\frac{1}{n} \sum_{\substack{i=1 \\ Z_i=1}}^n \hat{\mu}_{\PFS}(A_1, X_{\PFS,i}(A_1))\hat{\mu}_{\OS}(A_2, X_{\OS,i}(A_2))\mathbbm{1}_{\Delta_{\PFS,i}(A_1) - \Delta_{\PFS,i}(A_1 - \tau) = 1 \text{ or } \Delta_{\OS,i}(A_2) - \Delta_{\OS,i}(A_2 - \tau) = 1}
	\end{align*}
	The second summand is bounded by the empirical rate of events that happen close to the analysis dates. As in the preceding argument the probability of such an event becomes smaller than $\varepsilon/6$ for $\tau$ small enough. In an area around $t_1$ and $t_2$, we can bound this probability by $\varepsilon/6$ for $\tau$ small enough. Now, as a result of empirical process theory, we have a uniform convergence of the empirical rates to the true probabilities in the said area around $t_1$ and $t_2$. Hence, for large enough $n$ the probability of this second summand to be larger than $\varepsilon/5$ is arbitrarily small.\\
	With these two steps, we restricted ourselves to analysis dates close to the limits and to events, that are bounded away (in calendar time) by $\tau$ from the analysis date. The remaining difference can be decomposed as follows:
	\begin{align*}
		&\frac{1}{n} \sum_{\substack{i=1 \\ Z_i=1}}^n \hat{\mu}_{\PFS}(A_1, X_{\PFS,i}(A_1))\Delta_{\PFS,i}(A_1 - \tau)\hat{\mu}_{\OS}(A_2, X_{\OS,i}(A_2))\Delta_{\OS,i}(A_2 - \tau)\\
		-&\frac{1}{n} \sum_{\substack{i=1 \\ Z_i=1}}^n \mu_{\PFS}(A_1, X_{\PFS,i}(A_1))\Delta_{\PFS,i}(A_1 - \tau)\mu_{\OS}(A_2, X_{\OS,i}(A_2))\Delta_{\OS,i}(A_2 - \tau)\\
		+&\frac{1}{n} \sum_{\substack{i=1 \\ Z_i=1}}^n \mu_{\PFS}(A_1, X_{\PFS,i}(A_1))\Delta_{\PFS,i}(A_1 - \tau)\mu_{\OS}(A_2, X_{\OS,i}(A_2))\Delta_{\OS,i}(A_2 - \tau)\\
		-&\frac{1}{n} \sum_{\substack{i=1 \\ Z_i=1}}^n \mu_{\PFS}(t_1, X_{\PFS,i}(t_1))\Delta_{\PFS,i}(t_1 - \tau)\mu_{\OS}(t_2, X_{\OS,i}(t_2))\Delta_{\OS,i}(t_2 - \tau)\\
		+&\frac{1}{n} \sum_{\substack{i=1 \\ Z_i=1}}^n \mu_{\PFS}(t_1, X_{\PFS,i}(t_1))\Delta_{\PFS,i}(t_1 - \tau)\mu_{\OS}(t_2, X_{\OS,i}(t_2))\Delta_{\OS,i}(t_2 - \tau)\\
		-&\mathbb{E}[\Delta_{\PFS}(t_1) \Delta_{\OS}(t_2) \mu_{\PFS}(t_1, X_{\PFS}(t_1)) \mu_{\OS}(t_2, X_{\OS}(t_2))].
	\end{align*}
	The first and second summand can be bounded by taking the supremum over $A_l \in [t_l - \gamma, t_l + \gamma]$ for both $l \in \{1,2\}$ and then applying the uniform convergence from \ref{item:unif_conv_1} to show convergence to $0$ in probability. The next two terms also converge to $0$ by the Continuous Mapping Theorem. Convergence to zero of the last two summands is given by a simple Law of Large Numbers. Hence, for all of those summands we can choose $n$ large enough s.t. the probability of those summands exceeding $\varepsilon/5$ becomes smaller than $\delta/5$. We can plug all of this together to obtain the desired statement.\\ 
	\underline{Convergence of \eqref{eq:emp_2} to \eqref{eq:limit_2}:}\\
	Summand \eqref{eq:limit_2} can be rewritten by Fubini's theorem as
	\begin{equation*}
		\int_0^{\infty} \mu_{\OS}(t_1, s) \eta_{\PFS, \OS}(t_1, t_2, s) d\Lambda_{\OS}(s). 
	\end{equation*}	
	Analogously, we can reorder \eqref{eq:emp_2} as
	\begin{equation*}
		\int_0^{\infty} \hat{\mu}_{\OS}(A_2, s) \hat{\eta}_{\PFS, \OS}(A_1, A_2, s) \hat{\Lambda}_{\OS}(A_2, ds).
	\end{equation*}
	We can split both terms up by
	\begin{align*}
		&\int_0^{t_2 - \tau} \mu_{\OS}(t_1, s) \eta_{\PFS, \OS}(t_1, t_2, s) d\Lambda_{\OS}(s)\\
		+&\int_{t_2 - \tau}^{\infty} \mu_{\OS}(t_1, s) \eta_{\PFS, \OS}(t_1, t_2, s) d\Lambda_{\OS}(s).  
	\end{align*}
	and 
	\begin{align*}
		&\int_0^{A_2 - \tau} \hat{\mu}_{\OS}(A_2, s) \hat{\eta}_{\PFS, \OS}(A_1, A_2, s) \hat{\Lambda}_{\OS}(A_2, ds)\\
		+&\int_{A_2 - \tau}^{\infty} \hat{\mu}_{\OS}(A_2, s) \hat{\eta}_{\PFS, \OS}(A_1, A_2, s) \hat{\Lambda}_{\OS}(A_2, ds).  
	\end{align*}
	As in the previous part of the proof, the respective second summands of those two terms become arbitrarily small for some $\tau$ small enough if $n$ is big enough. For the second term, we require the fact that $\hat{\eta}_{\PFS, \OS}(A_1, A_2, s)$ is bounded by $Y_{\OS}(A_2,s)/n$.\\
	As above, we can also restrict ourselves to considerations of random calendar dates $A_1$ and $A_2$ that are close to their limits. We can rewrite the remaining difference as
	\begin{align*}
		&\int_{0}^{A_2 - \tau} \hat{\mu}_{\OS}(A_2, s) \hat{\eta}_{\PFS, \OS}(A_1, A_2, s) \; d(\Lambda_{\OS}(s) - \hat{\Lambda}_{\OS}(A_2, s))\\
		+ & \int_{0}^{A_2 - \tau} \left( \hat{\mu}_{\OS}(A_2, s) \hat{\eta}_{\PFS, \OS}(A_1, A_2, s) - \mu_{\OS}(A_1, s) \eta_{\PFS, \OS}(A_1, A_2, s) \right) d\Lambda_{\OS}(s)\\
		+ & \int_{0}^{A_2 - \tau} \mu_{\OS}(A_1, s) \eta_{\PFS, \OS}(A_1, A_2, s)\; d\Lambda_{\OS}(s) - \int_{0}^{t_2 - \tau} \mu_{\OS}(t_1, s) \eta_{\PFS, \OS}(t_1, t_2, s)\; d\Lambda_{\OS}(s).
	\end{align*}
	The first summand converges to zero by the uniform bounds on $\hat{\mu}$ and $\hat{\eta}$ and the uniform convergence of the Nelson-Aalen estimate stated in \ref{item:unif_conv_1}. The second summand then vanishes in the limit because of the uniform convergence stated in \ref{item:unif_conv_2}. The convergence of the last summand is provided by the Continuous Mapping Theorem.\\
	\underline{Convergence of \eqref{eq:emp_3} to \eqref{eq:limit_3}:}\\
	This convergence can be proven in the same way as the previous one with roles swapped between $\PFS$ and $\OS$.\\
	\underline{Convergence of \eqref{eq:emp_4} to \eqref{eq:limit_4}:}\\
	The arguments used so far can be repeated for this part of the proof. Other necessary adjustments have to be made analogously to the proof in \cite{Wei:1984}.
\end{proof}

\clearpage

\section{Simulation results}

\subsection{Compliance with the FWER}

As a supplement of Figure \ref{figure:fwer_comparisons} of the main manuscript, we provide the FWERs for the remaining testing procedures that are given in Table \ref{table:testing_procedures} but not shown in Figure \ref{figure:fwer_comparisons}. As already mentioned in the main manuscript, the results are very similar to those of the other procedures and there is no obvious evidence that any of these procedures systematically fail to comply with the FWER.

\begin{figure}[ht]
	\centering
	\includegraphics[width=\textwidth]{"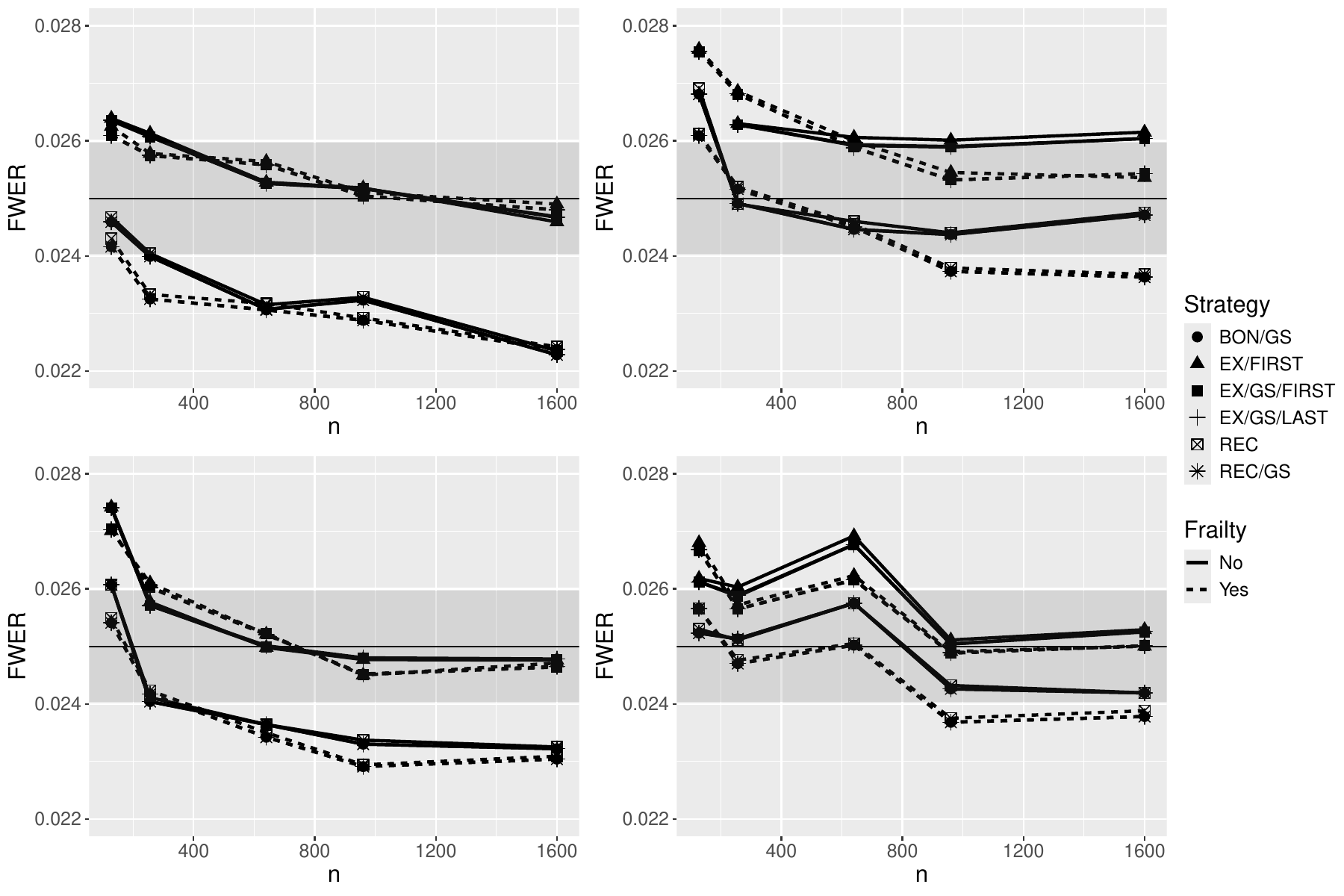"}
	\caption{FWERs of the four testing procedures BON/GS, EX/FIRST, EX/GS/FIRST and EX/GS/LAST with and without consideration of frailties in all four scenarios. The shaded area characterises the Monte Carlo sampling error interval. The subfigures are arranged as follows: Scenario 1, top left; Scenario 2, top right; Scenario 3, bottom left; Scenario 4, bottom right.}
\end{figure}%

\clearpage

\subsection{Power}

Analogously to Figure \ref{figure:power_comparisons} of the main manuscript, we provide power comparisons between some of the testing procedures presented in Table \ref{table:testing_procedures} of the main manuscript for the remaining scenarios (i.e. scenarios 2, 3 and 4 of Table \ref{table:results_scenario1_w=1}) with varying weighting parameter $w$, here. As in the main manuscript, we do not show the power to reject $H_{0,\PFS}$ (on the left hand sides of the following figures) for the procedures EX/GS/FIRST and EX/GS/LAST because their results are very similar to those of EX/FIRST and EX/LAST, respectively. Analogously, we do not show the disjunctive power (on the right hand side of the following figures) for the procedures EX/FIRST and EX/GS/FIRST because their results are very similar to those of EX/LAST and EX/GS/LAST, respectively.

\begin{figure}[ht]
    {\large\bfseries Scenario 2}
    \vspace{6pt}
    
	\centering
	\includegraphics[width=\textwidth]{"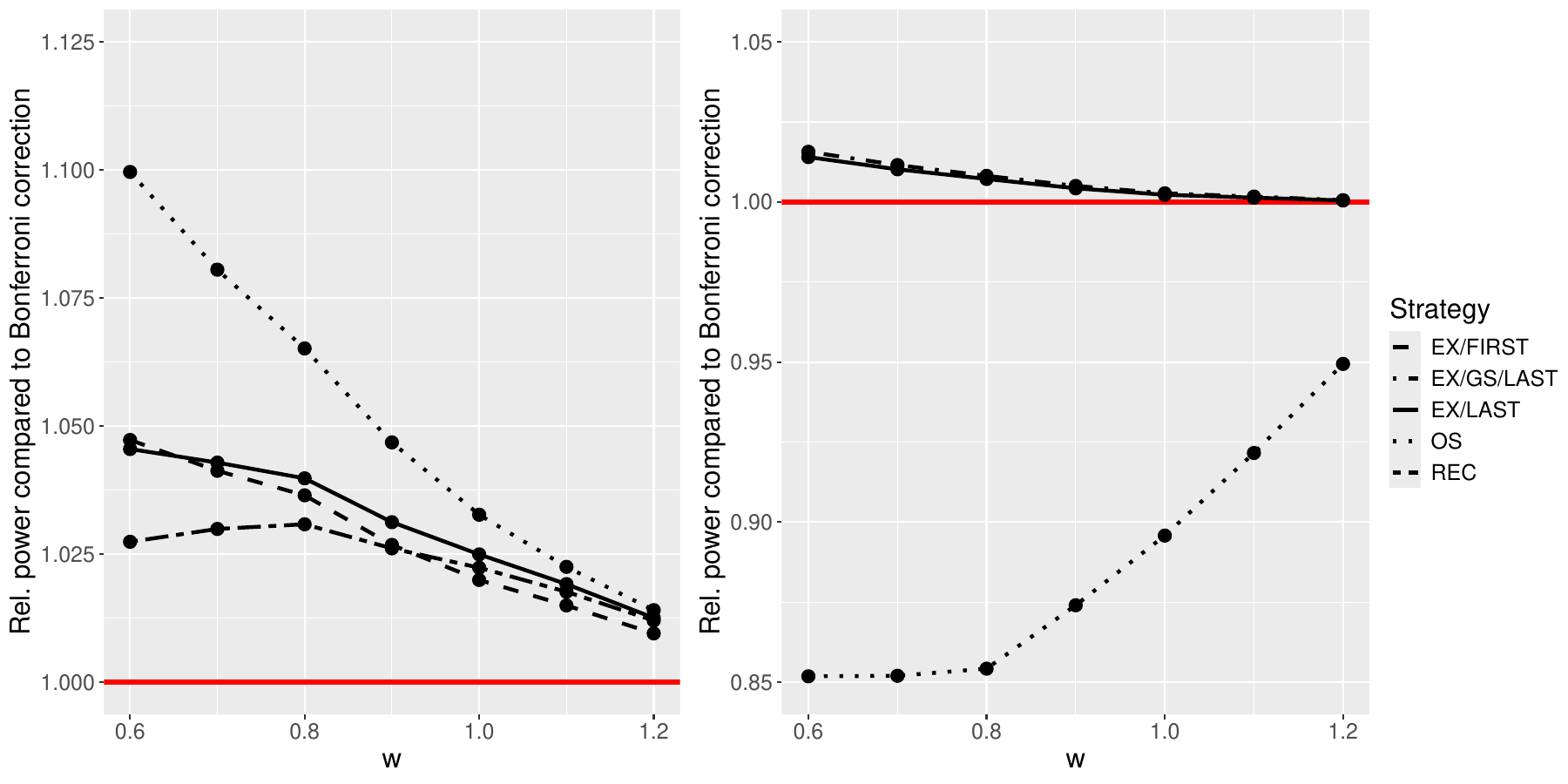"}
	\caption{On the left: Relative differences in power to reject $H_{0,\OS}$ of improved testing procedures and the OS procedure compared to the BON procedure. On the right: Relative differences in disjunctive power of improved testing procedures and the OS procedure compared to the BON procedure.\\
	Please note that not all procedures are shown to ensure the clarity of the plots.}
\end{figure}%

\begin{figure}
    {\large\bfseries Scenario 3}
    \vspace{6pt}
    
	\centering
	\includegraphics[width=\textwidth]{"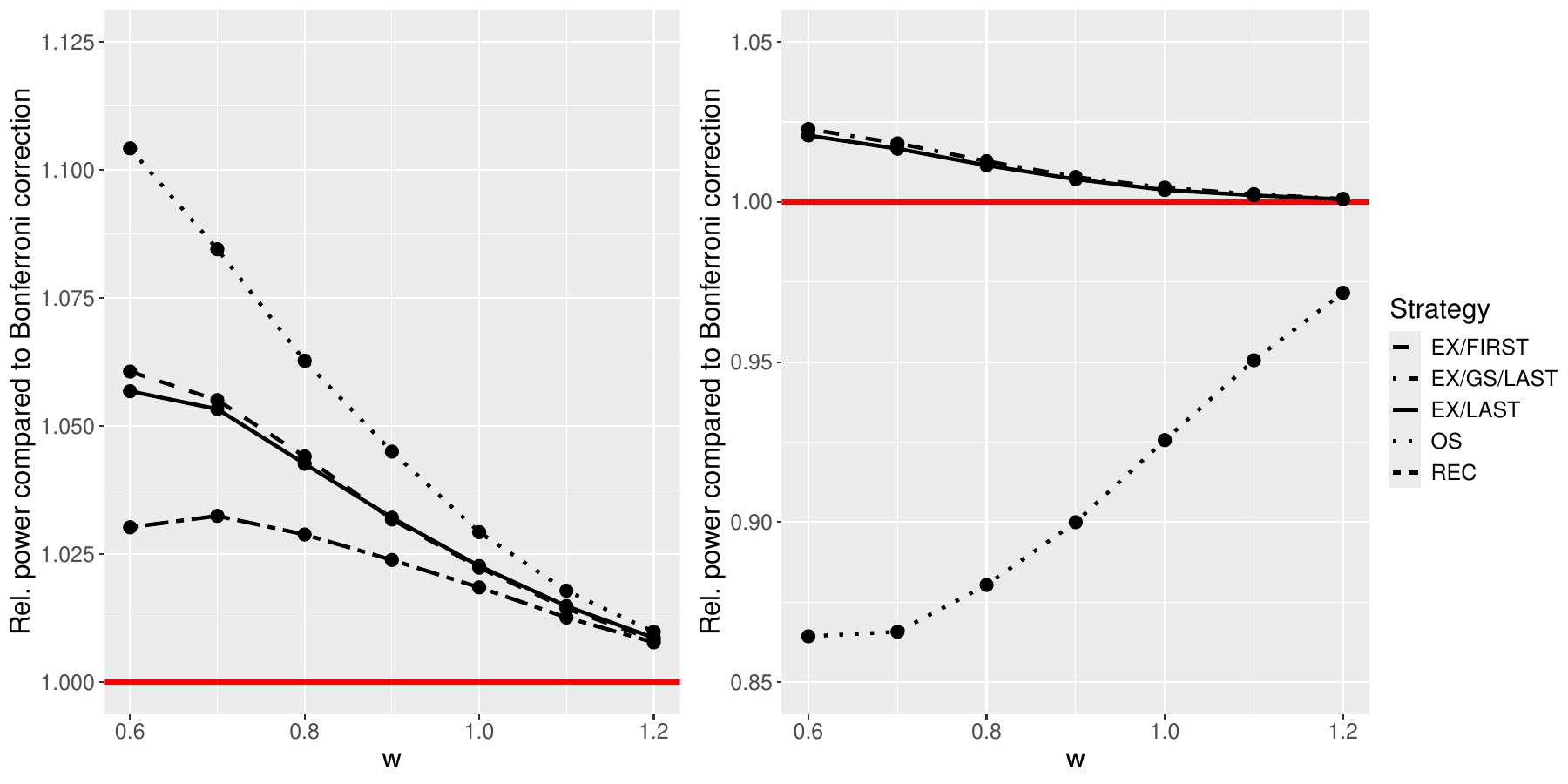"}
	\caption{On the left: Relative differences in power to reject $H_{0,\OS}$ of improved testing procedures and the OS procedure compared to the BON procedure. On the right: Relative differences in disjunctive power of improved testing procedures and the OS procedure compared to the BON procedure.\\
	Please note that not all procedures are shown to ensure the clarity of the plots.}
\end{figure}%

\clearpage

\begin{figure}
    {\large\bfseries Scenario 4}
    \vspace{6pt}
    
	\centering
	\includegraphics[width=\textwidth]{"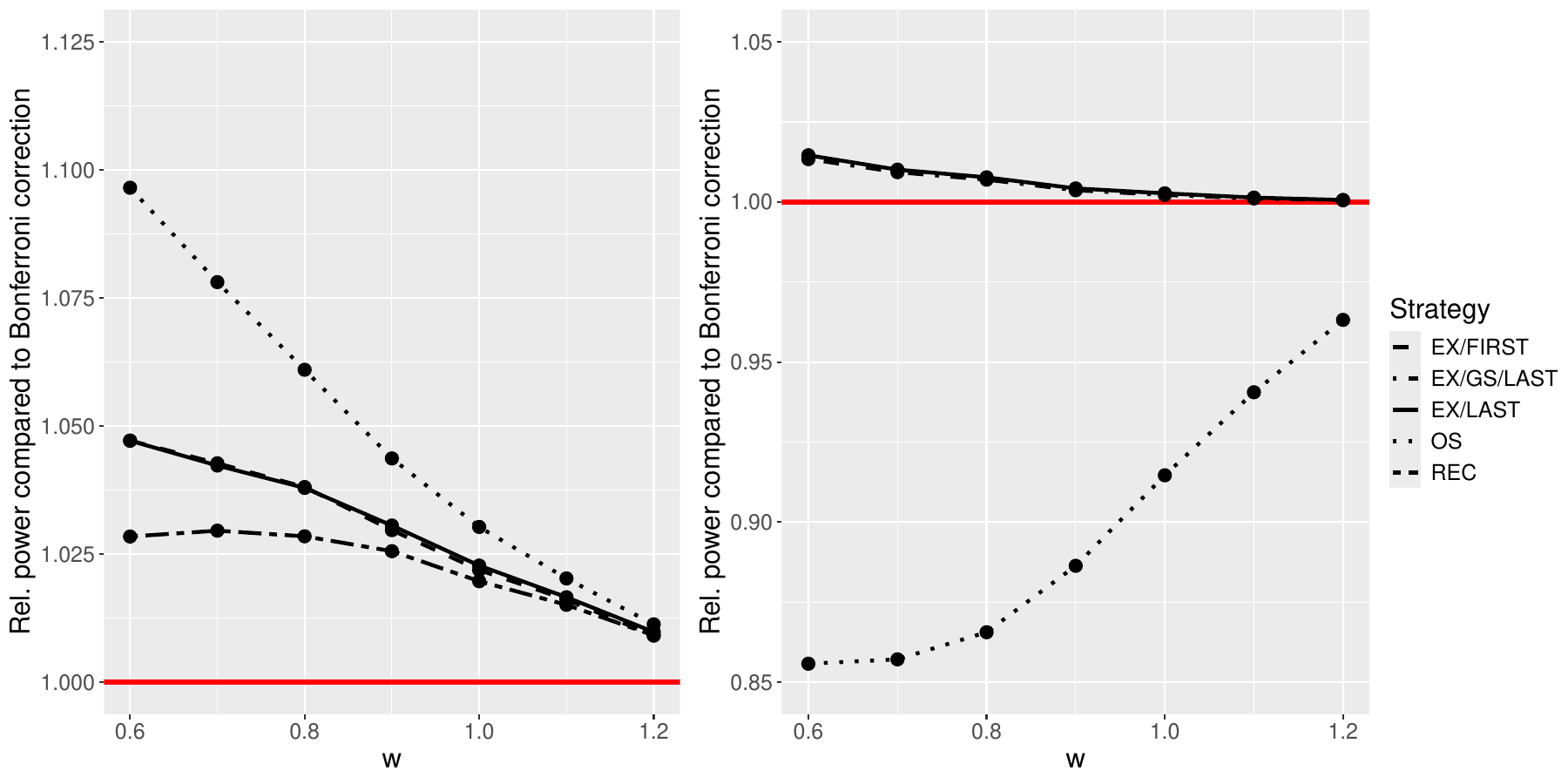"}
	\caption{On the left: Relative differences in power to reject $H_{0,\OS}$ of improved testing procedures and the OS procedure compared to the BON procedure. On the right: Relative differences in disjunctive power of improved testing procedures and the OS procedure compared to the BON procedure.\\
	Please note that not all procedures are shown to ensure the clarity of the plots.}
\end{figure}%

\putbib
\end{bibunit}

\end{document}